\theoremstyle{thmstyleone}%
\newtheorem{theorem}{Theorem}
\newtheorem{proposition}[theorem]{Proposition}%
\theoremstyle{thmstyletwo}%
\newtheorem{example}{Example}%
\theoremstyle{thmstylethree}%
\newtheorem{lemma}[theorem]{Lemma}
\newtheorem{corollary}[theorem]{Corollary}
\begin{document}

\title[Article Title]{Diffusion Theory of Hyperbolic Groups}


\author*[1]{\fnm{Peter} \sur{Morrison}}\email{peter.morrison@uts.edu.au}

\affil*[1]{\orgdiv{Department of Maths and Physical Sciences}, \orgname{University of Technology, Sydney}, \orgaddress{\street{Broadway}, \city{Sydney}, \postcode{2007}, \state{NSW}, \country{Australia}}}

\abstract{

This paper outlines a method where a brachistochrone is developed
for the hyperbolic plane. This technique is then used to calculate
the Fubini-Study metric and consequent Laplacian operator. We discuss
the various systems of eigenfunctions on the Poincare disk, including
Mehler-Fock, Macdonald and Whittaker functions. The relationship between
these systems of differential equations is exploited using an Laplace
transform method on the hyperbolic plane, which allows us to transform
the solutions to the Helmholz equation from one space to another.
Discussion of further applications of this technique is given with
particular reference to diffusion systems on alternate forms of hyperbolic
and projective geometry.
}

\keywords{Special Functions; Mehler-Fock; Kernel; Fubini-Study; Projective Geometry; Matrix Analysis}



\maketitle

\section{Introduction}

Recent progress in the analysis of diffusion problems for various
systems of special functions has produced a number of interesting
results for transition probability densities, kernels, and Green's
functions. For example, one may find calculations of expressions for
the Laplace transform of the probability kernel in \cite{albanese2007laplace, albanese2007transformations,craddock2014integral}
where two different methods are applied to find the kernel density
related to the CIR process. We seek a more fundamental understanding
of how the various systems given by the integral transform pairs given
by the Mehler-Fock, Kontorovich-Lebedev and Whittaker diffusions come
about, and a deeper theory of how these transform pairs relate to
one another in terms of the solution space of eigenfunctions. We note
that the paper of \cite{sousa2017spectral} covers the Titchmarsh-Kodaira theorem
and related results. This note shall perform calculations that result
in similar formulae, however, we shall deviate significantly from
any of the other methodologies by relying on insights from the theory
of groups and differential geometry. We shall use the intersection
between the metrics and differential geometry which describe continuous
functions on curvilinear spaces, the operators which define translations
on the eigenfunctions themselves, and some simple observations from
the use of hyperbolic group theory to derive a number of interesting
results. We shall be concerned with the continuous states for the
most part in the following computation, but it is useful to remember
that the discrete part of the spectrum is important for certain regimes
of closely related systems to the ones we shall discuss.

The calculation shall proceed by computing the brachistochrone equation
on the hyperbolic surface, the metrics and differential geometry developed
is then used to define a diffusion problem. By examining the various
different transformations between groups of eigenstates, it is then
shown that a number of seemingly disparate systems are effectively
transmutable.

\section{Hyperbolic Brachistochrone}

Let us now talk about how the quantum brachistochrone method may be
extended to hyperbolic space. As is well known, SU(2) is associated
to a spherical geometry; we are interested in how the SU(2)$\leftrightarrow$SU(1,1)
equivalence may be exploited. In general terms, we expect that it
will take the form of a rotation into imaginary time, in the same
way that a Wick rotation \cite{wick1954properties} allows us to take a complex
path integral for a quantum space and use this to generate a real
path integral for a stochastic space, in imaginary time. Writing now
\begin{equation}
    i\left|\dot{\Psi}\right\rangle =i\dfrac{d}{dt}\left|\Psi\right\rangle =\tilde{H}(t)\left|\Psi\right\rangle 
\end{equation}
it is simple to see that we will have the quantum brachistochrone defined through the projective metric:

\begin{theorem}
\begin{equation}
\int1dt=\int\dfrac{\sqrt{\left\langle \Psi\right|\left.\tilde{H}(1-\hat{P})\tilde{H}\right.\left|\Psi\right\rangle }}{\Delta E}dt
\end{equation}
\end{theorem}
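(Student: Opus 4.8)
The plan is to realize the elapsed evolution time as the Fubini--Study arc length of the trajectory traced by $|\Psi\rangle$ in projective Hilbert space, divided pointwise by the instantaneous rate at which that arc length is swept out. Concretely, I would begin from the Fubini--Study line element for a normalized state,
\[
ds^2 = \langle d\Psi | (1-\hat P) | d\Psi\rangle, \qquad \hat P = |\Psi\rangle\langle\Psi|,
\]
which measures distance modulo the global phase that $\hat P$ projects out. The quantity under the square root in the statement is precisely $ds^2/dt^2$, so the entire content of the theorem is the assertion that this rate coincides with $\Delta E$, after which the identity $\int 1\,dt = \int (ds/dt)/\Delta E\,dt$ is immediate.

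The first substantive step is to feed the equation of motion $i|\dot\Psi\rangle = \tilde H(t)|\Psi\rangle$ into the line element. Writing $|d\Psi\rangle = |\dot\Psi\rangle\,dt = -i\tilde H|\Psi\rangle\,dt$ together with the conjugate $\langle d\Psi| = i\langle\Psi|\tilde H\,dt$, the factors of $\pm i$ cancel and I obtain
\[
ds^2 = \langle\Psi|\tilde H(1-\hat P)\tilde H|\Psi\rangle\,dt^2,
\]
so that $ds/dt = \sqrt{\langle\Psi|\tilde H(1-\hat P)\tilde H|\Psi\rangle}$ is exactly the numerator on the right-hand side. Expanding the projector then exposes its physical meaning: since $(1-\hat P) = 1 - |\Psi\rangle\langle\Psi|$, one finds
\[
\langle\Psi|\tilde H(1-\hat P)\tilde H|\Psi\rangle = \langle\tilde H^2\rangle - \langle\tilde H\rangle^2 = (\Delta E)^2,
\]
the variance of the generator in the current state. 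Hence $ds/dt = \Delta E$, the numerator and denominator on the right-hand side cancel pointwise along the trajectory, and integration yields $\int 1\,dt$, proving the claim.

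The main obstacle is to secure the two structural assumptions that make the middle identity hold in the hyperbolic setting: that $|\Psi\rangle$ remains normalized so that $\hat P$ is a genuine rank-one projector, and that $\tilde H$ is self-adjoint with respect to the relevant pairing so that $\langle\Psi|\tilde H = (\tilde H|\Psi\rangle)^\dagger$ and the cross terms assemble into a real, non-negative variance. Because the generator here descends from the SU(2)$\leftrightarrow$SU(1,1) rotation into imaginary time, the pairing is the indefinite SU(1,1) form rather than the positive-definite one, and one must check that the Wick rotation preserves pseudo-Hermiticity and that $(1-\hat P)$ stays non-negative on the physical slice, so that $ds^2 \ge 0$ and the square root is real. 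Verifying these conditions --- rather than the algebra of the cancellation --- is where the care is genuinely required.
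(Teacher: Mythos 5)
Your proof is correct and is exactly the argument the paper implicitly relies on: the paper offers no derivation at all, simply asserting the identity is ``simple to see'' after stating $i|\dot{\Psi}\rangle=\tilde{H}|\Psi\rangle$, and then records $\Delta E=\sqrt{\langle\tilde{H}^{2}\rangle-\langle\tilde{H}\rangle^{2}}$ as the companion theorem --- which is precisely the expansion of $\langle\Psi|\tilde{H}(1-\hat{P})\tilde{H}|\Psi\rangle$ you carry out. Your closing remarks on normalization, pseudo-Hermiticity, and positivity of $(1-\hat{P})$ under the indefinite SU(1,1) pairing go beyond anything the paper addresses and identify the genuine point of care in the hyperbolic extension.
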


or, upon using the standard expression for the projective operator,
i.e. $\hat{P}=\left|\Psi\right\rangle \left\langle \Psi\right|$, the energy dispersion is then
\begin{theorem}
    
\begin{equation}
\Delta E=\sqrt{\left\langle \Psi\right|\left.\tilde{H}^{2}(t)\left|\Psi\right\rangle -\left(\left\langle \Psi\right|\tilde{H}(t)\left|\Psi\right\rangle \right)^{2}\right.}
\end{equation}
\end{theorem}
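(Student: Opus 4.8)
The plan is to read the second statement as an immediate algebraic corollary of the first, extracting the energy dispersion from the numerator of the brachistochrone functional once the projector is given its explicit rank-one form $\hat{P}=\left|\Psi\right\rangle\left\langle\Psi\right|$. First I would fix the normalisation $\langle\Psi|\Psi\rangle=1$ that is implicit in treating $\hat{P}$ as an orthogonal projector; this is what makes $\hat{P}$ idempotent and self-adjoint, and it is what will let the cross term factor cleanly.

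The core step is to expand the numerator $\left\langle\Psi\right|\tilde{H}(1-\hat{P})\tilde{H}\left|\Psi\right\rangle$ that appears in the previous theorem. Distributing $1-\hat{P}$ gives
\begin{equation}
\left\langle\Psi\right|\tilde{H}(1-\hat{P})\tilde{H}\left|\Psi\right\rangle=\left\langle\Psi\right|\tilde{H}^{2}\left|\Psi\right\rangle-\left\langle\Psi\right|\tilde{H}\left|\Psi\right\rangle\left\langle\Psi\right|\tilde{H}\left|\Psi\right\rangle,
\end{equation}
where the second term collapses to $(\left\langle\Psi\right|\tilde{H}\left|\Psi\right\rangle)^{2}$ precisely because $\hat{P}$ inserts the normalised state between the two factors of $\tilde{H}$. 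Identifying the right-hand side with $(\Delta E)^{2}$ and taking the positive root yields the claimed formula. As a bonus this closes the loop with the first theorem: the numerator is now seen to equal $\Delta E$ itself, so the integrand $\sqrt{\cdots}/\Delta E$ is identically $1$ and the reparametrisation $\int 1\,dt$ is self-consistent.

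The genuine obstacle is not this one-line computation but the justification that the quantity under the root is real and non-negative, so that it may legitimately be called an energy dispersion. In an ordinary Hermitian theory positivity is immediate from Cauchy--Schwarz applied to $\tilde{H}\left|\Psi\right\rangle$ and $\left|\Psi\right\rangle$. Here, however, $\tilde{H}$ emerges after the rotation into imaginary time that implements the $\mathrm{SU}(2)\leftrightarrow\mathrm{SU}(1,1)$ correspondence, and self-adjointness with respect to the naive inner product can break down. I would therefore spend the real effort specifying the indefinite, Krein-type inner product carried by the hyperbolic metric, verifying that $\tilde{H}$ is self-adjoint (or at least pseudo-Hermitian) against it, and checking that the induced Fubini--Study line element $\left\langle d\Psi\right|(1-\hat{P})\left|d\Psi\right\rangle$ stays positive on the relevant sheet of the Poincar\'e disk. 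With that sign control established, the expectation values are real and the identification of $\langle\tilde{H}^{2}\rangle-\langle\tilde{H}\rangle^{2}$ with a bona fide variance is justified, and the theorem follows.
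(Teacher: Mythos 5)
Your core computation---substituting $\hat{P}=\left|\Psi\right\rangle\left\langle\Psi\right|$ into the numerator $\left\langle\Psi\right|\tilde{H}(1-\hat{P})\tilde{H}\left|\Psi\right\rangle$ and expanding to obtain $\langle\tilde{H}^{2}\rangle-\langle\tilde{H}\rangle^{2}$---is exactly the paper's (essentially one-line) argument, which simply identifies the result as the standard variance of the energy. Your additional discussion of reality and positivity under the indefinite Krein-type inner product goes beyond anything the paper attempts here, and is a legitimate concern the paper defers to its later remarks on quasiunitary, non-Hermitian structure, but it does not change that the proof of the stated formula coincides with the paper's.
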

which we recognise as the standard expression for the variance in
the energy as represented in the quantum system. Let us now examine
the nature of complexification with respect to this system. We have
a quantum state which is dependent on time via $\left|\Psi(t)\right\rangle $,
we have also variously a Hamiltonian matrix $\tilde{H}(t)$ and a
constraint $\tilde{F}(t)$, all of these objects are required in order
to specify the state of the system. The quantum brachistochrone, which
is equivalent to the von Neumann equation for the operator $\tilde{H}(t)+\tilde{F}(t)$,
may be written as \cite{carlini2005quantum, carlini2006time, carlini2007time, morrison2008time, morrison2012time, morrison2019time}
\begin{theorem}

\begin{equation}
i\dfrac{d}{dt}(\tilde{H}(t)+\tilde{F}(t))=\tilde{H}(t)\tilde{F}(t)-\tilde{F}(t)\tilde{H}(t)
\end{equation}
    
\end{theorem}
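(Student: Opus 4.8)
The plan is to derive the stated identity as the Euler--Lagrange equation of the time-optimal variational problem whose cost is the functional of the first theorem, with $\tilde{F}(t)$ arising as the Hermitian operator that packages the Lagrange multipliers. First I would build the action
\[
S=\int\!\Bigl[\,\frac{\sqrt{\langle\dot{\Psi}|(1-\hat{P})|\dot{\Psi}\rangle}}{\Delta E}+\langle\chi|\bigl(|\dot{\Psi}\rangle+i\tilde{H}|\Psi\rangle\bigr)+\mathrm{c.c.}+\mathrm{Tr}\bigl[\tilde{F}\,G(\tilde{H})\bigr]\Bigr]\,dt,
\]
adjoining the normalisation $\langle\Psi|\Psi\rangle=1$, the Schr\"odinger law $i|\dot{\Psi}\rangle=\tilde{H}|\Psi\rangle$ through the costate $|\chi\rangle$, and the physical restrictions on the admissible Hamiltonians through a constraint $G(\tilde{H})=0$ (an isospectral or subalgebra condition). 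The multiplier conjugate to $G$ is exactly the constraint operator $\tilde{F}(t)$, so that $\tilde{H}(t)+\tilde{F}(t)$ is the full generator of the constrained flow.

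Since the integrand is reparametrisation invariant, I would next fix the affine gauge in which $\Delta E$ is constant along the path; in that gauge the Anandan--Aharonov relation $ds=\Delta E\,dt$ implicit in the first theorem reduces the cost to the Fubini--Study arclength, and minimising the transit time becomes a \emph{constrained geodesic} problem on projective Hilbert space. Varying $S$ with respect to $\langle\Psi|$ yields the state equation, and varying with respect to the matrix elements of $\tilde{H}$ (with the trace pairing on the Lie algebra) fixes the optimal Hamiltonian in terms of $|\Psi\rangle$ and the multipliers. The essential step is then to impose that $G(\tilde{H})=0$ be preserved along the motion: differentiating the constraint and substituting the state equation forces the multiplier operator to co-rotate with the flow, which is precisely the content we are after.

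Assembling these pieces, the preservation condition is most transparently stated as the covariance $\tilde{H}(t)+\tilde{F}(t)=U(t)\bigl(\tilde{H}(0)+\tilde{F}(0)\bigr)U(t)^{-1}$, where $i\dot{U}=\tilde{H}U$ and $U(0)=I$. Differentiating, and using $i\,\tfrac{d}{dt}U^{-1}=-U^{-1}\tilde{H}$, gives
\[
i\frac{d}{dt}\bigl(\tilde{H}+\tilde{F}\bigr)=\bigl[\tilde{H},\,\tilde{H}+\tilde{F}\bigr]=\bigl[\tilde{H},\tilde{F}\bigr]=\tilde{H}\tilde{F}-\tilde{F}\tilde{H},
\]
which is the assertion and exhibits it as the von Neumann equation for $\tilde{H}+\tilde{F}$ with generator $\tilde{H}$. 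I expect the main obstacle to lie in the variation with respect to the operator $\tilde{H}$: the square-root denominator $\Delta E$ makes the cost nonlinear, and one must show carefully that stationarity together with constraint preservation forces $\tilde{F}$ to evolve by conjugation rather than by some additional inhomogeneous term. Once the homogeneity of the evolution of $\tilde{F}$ is established, the commutator structure---and hence the equivalence with the von Neumann equation---follows as above.
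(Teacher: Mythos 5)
The paper does not actually prove this theorem: it is imported from the quantum brachistochrone literature (Carlini et al.), described as the von Neumann equation for the operator $\tilde{H}(t)+\tilde{F}(t)$, and used as a starting point, so any proof you supply is necessarily going beyond what the text contains. Your variational setup --- adjoining the Schr\"odinger law through a costate and the admissibility conditions on $\tilde{H}$ through a multiplier operator $\tilde{F}$, then fixing the reparametrisation gauge so the cost reduces to Fubini--Study arclength --- is exactly the derivation in the cited references, so the overall strategy is the right one.

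The gap is in the step you yourself flag as the ``main obstacle.'' You obtain the commutator identity by \emph{assuming} the covariance $\tilde{H}(t)+\tilde{F}(t)=U(t)\bigl(\tilde{H}(0)+\tilde{F}(0)\bigr)U(t)^{-1}$ and then differentiating; but that covariance is logically equivalent to the equation $i\,\tfrac{d}{dt}(\tilde{H}+\tilde{F})=[\tilde{H},\tilde{F}]$ being proved (integrating the latter along the flow $i\dot{U}=\tilde{H}U$ recovers the former), so as written the argument is circular at its only substantive point. What must actually be supplied is the stationarity computation: varying with respect to $\tilde{H}$ under the trace pairing identifies $\tilde{F}$ with a dyadic built from $|\Psi\rangle$ and the costate (schematically $\tilde{F}\propto|\chi\rangle\langle\Psi|+|\Psi\rangle\langle\chi|$), varying with respect to $\langle\Psi|$ gives the costate equation including the contribution of the square-root time term, and only after differentiating that dyadic and checking that the extra contributions assemble into $i\,d\tilde{H}/dt$ on the left does the inhomogeneous term cancel and the von Neumann form emerge. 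Without carrying this out you have not excluded the ``additional inhomogeneous term'' you mention, which is precisely the content of the theorem.
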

and the unitary operator of the system is defined by

\begin{lemma}
    
$\hat{U}(t,s)=\exp\left(-i\int_{s}^{t}\tilde{H}(\tau)d\tau\right)$
\end{lemma}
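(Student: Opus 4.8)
The plan is to establish that the operator $\hat{U}(t,s)$ given by the stated exponential is precisely the propagator for the Schr\"odinger-type evolution $i\,\frac{d}{dt}\left|\Psi\right\rangle=\tilde{H}(t)\left|\Psi\right\rangle$. This reduces to verifying three properties: the initial condition $\hat{U}(s,s)=\mathbb{I}$, the unitarity $\hat{U}^{\dagger}\hat{U}=\mathbb{I}$, and the equation of motion $i\,\frac{d}{dt}\hat{U}(t,s)=\tilde{H}(t)\,\hat{U}(t,s)$. Once these hold, uniqueness of solutions to the linear evolution equation forces $\left|\Psi(t)\right\rangle=\hat{U}(t,s)\left|\Psi(s)\right\rangle$, which is the content of the lemma.

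First I would dispose of the two routine properties. The initial condition is immediate, since setting $t=s$ collapses the integral to zero and $\exp(0)=\mathbb{I}$. For unitarity I would use that $\tilde{H}(\tau)$ is self-adjoint for each $\tau$, so the exponent $-i\int_{s}^{t}\tilde{H}(\tau)\,d\tau$ is skew-adjoint; the exponential of a skew-adjoint operator is unitary, giving $\hat{U}^{\dagger}=\hat{U}^{-1}$ at once. This simultaneously guarantees conservation of the norm $\left\langle\Psi\right|\left.\Psi\right\rangle$, consistent with the projective framework underlying the earlier theorems and the energy dispersion $\Delta E$.

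The substantive step is the equation of motion, and here lies the main obstacle. Writing $A(t)=-i\int_{s}^{t}\tilde{H}(\tau)\,d\tau$, the derivative of a matrix exponential is in general
\begin{equation}
\frac{d}{dt}e^{A(t)}=\int_{0}^{1}e^{\sigma A(t)}\,A'(t)\,e^{(1-\sigma)A(t)}\,d\sigma,
\end{equation}
which collapses to the desired $A'(t)\,e^{A(t)}=-i\tilde{H}(t)\,\hat{U}(t,s)$ only when $A(t)$ commutes with $A'(t)=-i\tilde{H}(t)$, that is, only when $\bigl[\int_{s}^{t}\tilde{H}(\tau)\,d\tau,\ \tilde{H}(t)\bigr]=0$. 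I therefore expect the crux of the argument to be the demonstration that this commutator vanishes in the present setting.

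To close that gap I would appeal to the structure imposed by the brachistochrone equation of the preceding theorem, $i\,\frac{d}{dt}(\tilde{H}+\tilde{F})=\tilde{H}\tilde{F}-\tilde{F}\tilde{H}$, which fixes the optimal evolution so that $\tilde{H}$ advances along a one-parameter subgroup of SU$(1,1)$; geometrically the extremal path is a geodesic of the Fubini--Study metric, along which the generating Hamiltonian retains a fixed direction in the Lie algebra. In that regime $\tilde{H}(\tau)$ at distinct times commute among themselves, the Dyson time-ordering becomes trivial, and the plain exponential is exact. Should one instead wish to retain full time dependence without this hypothesis, the honest repair is to read the right-hand side as a time-ordered exponential, and I would flag that reading so the propagator identity remains valid in the general case.
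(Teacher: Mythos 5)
First, note that the paper itself offers no proof of this statement: it is introduced with the words ``the unitary operator of the system is defined by,'' so it functions as a definition (or notational convention) rather than a proved lemma. Your proposal is therefore an attempt to supply an argument the paper does not contain, and you have correctly located the one genuinely substantive issue, namely that $\frac{d}{dt}e^{A(t)}=A'(t)e^{A(t)}$ holds only when $\bigl[A(t),A'(t)\bigr]=0$, i.e.\ only when $\bigl[\int_{s}^{t}\tilde{H}(\tau)\,d\tau,\ \tilde{H}(t)\bigr]=0$. The routine parts (initial condition, unitarity from self-adjointness of $\tilde{H}$) are fine in the SU(2) setting of this lemma, though note that unitarity is deliberately abandoned once the paper passes to complex time.

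The gap is in your proposed closure of that commutator condition. You argue that the brachistochrone equation forces $\tilde{H}(t)$ to ``retain a fixed direction in the Lie algebra,'' so that Hamiltonians at distinct times commute. This is false for the very system the paper constructs. With the constant constraint $\tilde{\mathcal{F}}_{0}=\Omega\hat{\sigma}_{y}$, the equation $i\frac{d}{dt}(\tilde{H}+\tilde{F})=[\tilde{H},\tilde{F}]$ makes $\tilde{H}$ \emph{precess} about the constraint axis rather than stay fixed; explicitly the paper's Hamiltonian is $\tilde{H}_{S}(t)=R\left(\sin(2\omega t)\hat{\sigma}_{x}-\cos(2\omega t)\hat{\sigma}_{z}\right)$, for which
\begin{equation}
\bigl[\tilde{H}_{S}(a),\tilde{H}_{S}(b)\bigr]=2iR^{2}\sin\bigl(2\omega(a-b)\bigr)\,\hat{\sigma}_{y}\neq 0 .
\end{equation}
Indeed, if $\tilde{H}$ kept a fixed Lie-algebra direction it would commute with $\dot{\tilde{H}}$ and the driving commutator $[\tilde{H},\tilde{F}]$ would have to vanish, which is exactly the trivial (time-independent) case. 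So the geodesic/one-parameter-subgroup argument cannot rescue the plain exponential. Your fallback reading --- interpret the right-hand side as a time-ordered (Dyson) exponential --- is the correct one, and it is the reading consistent with the paper's own subsequent factorization $\hat{U}_{S}(t,s)=\hat{W}_{S}(t)\hat{W}_{S}^{-1}(s)$ in terms of a fundamental solution matrix; but under that reading the lemma is a definition of notation, not something with a nontrivial proof, and your main commutativity step should be deleted rather than repaired.
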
.

\begin{proposition}
    It is possible to extend the quantum brachistochrone to hyperbolic systems by using analytic continuation to complex time.
\end{proposition}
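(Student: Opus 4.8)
The plan is to construct an explicit analytic continuation $t \mapsto -i\tau$ that carries the unitary evolution $\hat{U}(t,s)=\exp\left(-i\int_{s}^{t}\tilde{H}(\tau)d\tau\right)$ of Lemma~1 into a contraction semigroup governing diffusion on the hyperbolic plane, mirroring the Wick-rotation heuristic described above. First I would recall the $SU(2)\leftrightarrow SU(1,1)$ correspondence: the generators satisfy the same structure constants up to a sign flip in one commutator, so replacing a real rotation parameter by an imaginary one sends the compact spherical geometry to the non-compact hyperbolic geometry of the Poincar\'e disk. I would make this precise by writing the Hamiltonian in terms of the $\mathfrak{su}(2)$ basis and performing the substitution that promotes the relevant generators to their $\mathfrak{su}(1,1)$ counterparts, thereby reinterpreting the brachistochrone variational problem on the sphere as one on the hyperbolic surface.

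Next I would verify that the brachistochrone equation of Theorem~3, namely $i\frac{d}{dt}(\tilde{H}(t)+\tilde{F}(t))=\tilde{H}(t)\tilde{F}(t)-\tilde{F}(t)\tilde{H}(t)$, is preserved in form under complexification. The key observation is that the commutator bracket and the first-order time derivative both transform covariantly under $t\mapsto -i\tau$: the factor of $i$ on the left-hand side is exactly absorbed by the Jacobian $dt = -i\,d\tau$, so the equation retains its shape with $\tilde{H}$ and $\tilde{F}$ now valued in the hyperbolic algebra. I would then check that the projective metric of Theorem~1 and the energy dispersion $\Delta E$ of Theorem~2 remain real and positive along the continued path, which is what guarantees that the continued object is a genuine metric on the hyperbolic state space rather than a formal expression; this uses the indefinite-signature inner product native to $SU(1,1)$ and the fact that physical states lie in the positive cone.

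The main obstacle will be establishing that the analytic continuation is well-defined and convergent rather than merely formal. Specifically, because $SU(1,1)$ is non-compact, the operator $\exp\left(-i\int_{s}^{t}\tilde{H}(\tau)d\tau\right)$ need not extend to a bounded operator for all complex times, and one must identify the sector of the complex $t$-plane in which the continued evolution is holomorphic and the relevant matrix elements stay finite. I would address this by appealing to the self-adjointness (or appropriate sectoriality) of $\tilde{H}$ on the physical Hilbert space, invoking a Stone-type or Hille--Yosida argument to show that the rotation into imaginary time yields a strongly continuous contraction semigroup $e^{-\tau \tilde{H}}$ whose generator is the hyperbolic Laplacian to be computed in the sequel. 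Finally I would tie the construction back to the diffusion interpretation by noting that this semigroup is precisely the heat/transition kernel on the Poincar\'e disk, which is the object the remainder of the paper analyses, thereby completing the extension and justifying the passage from the quantum to the stochastic hyperbolic system.
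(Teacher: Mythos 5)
Your central idea --- substitute $t \to it$ and watch the $SU(2)$ evolution become an $SU(1,1)$ one --- is exactly the paper's idea, but the paper's actual proof is far more elementary than your plan: it writes down the explicit $SU(2)$ evolution matrix $\hat{U}_{S}(t,s)$ with $\phi=\omega(t-s)$, performs the substitution, and observes that
\begin{equation*}
\hat{U}_{S}(it,0)=e^{-\omega t}\left[\begin{array}{cc}\cosh\omega t & i\sinh\omega t\\ -i\sinh\omega t & \cosh\omega t\end{array}\right]
\end{equation*}
coincides, up to the scalar factor $e^{-\omega t}$, with Vilenkin's quasiunitary generator $\hat{u}_{2}$ of $SU(1,1)$. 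Everything here is a $2\times 2$ matrix, so none of the Stone, Hille--Yosida, sectoriality or boundedness apparatus you invoke is needed or used; your plan remains programmatic (``I would make this precise by\dots'') and never actually exhibits this one-line computation, which is the entire content of the paper's argument.

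More seriously, one of your verification steps would fail. You propose to check that the projective metric and the energy dispersion $\Delta E$ ``remain real and positive along the continued path,'' treating positivity as the guarantee that the continuation is genuine. The paper establishes essentially the opposite: after continuation the space is explicitly non-Hermitian (the operators are quasiunitary, with $\hat{\mathcal{U}}_{S}(0,s)=\hat{\mathcal{U}}_{S}(-s,0)$), the isotropic constraint flips sign to $\mathrm{Tr}(\tilde{H}^{2}/2)=-R^{2}$, and the Fubini--Study metric it computes is indefinite, $ds^{2}=\tfrac{1}{4}(-d\tau^{2}+\sinh^{2}\tau\, d\varphi^{2})$. Demanding positivity would therefore obstruct the extension rather than justify it; what actually survives the continuation is the trace and commutator structure, while unitarity and positive-definiteness are deliberately given up. Relatedly, your claim that the brachistochrone equation ``retains its shape'' because the $i$ is absorbed by the Jacobian is not what the paper finds: the continued equation reads $-\frac{d}{dt}(\tilde{H}(it)+\tilde{F}(it))=[\tilde{H}(it),\tilde{F}(it)]$, i.e.\ the factor of $i$ disappears and a sign appears, which is precisely the signature of the passage from unitary to quasiunitary dynamics.
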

\begin{corollary}
    Matrices of group representation theory can be associated with brachistochrones on hyperbolic space.
\end{corollary}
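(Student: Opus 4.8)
The plan is to read the Corollary as an explicit construction: the Proposition guarantees that the brachistochrone data admit analytic continuation in time, and I would carry this continuation out at the level of the generating Lie algebra so that the resulting matrices are precisely those furnishing representations of the hyperbolic isometry group. First I would regard the Hamiltonian $\tilde{H}(t)$ appearing in the brachistochrone equation as a curve in $\mathfrak{su}(2)$, writing $\tilde{H}(t)=\sum_k h_k(t)\,\sigma_k$ in a Pauli basis, so that the propagator $\hat{U}(t,s)=\exp\left(-i\int_s^t \tilde{H}(\tau)\,d\tau\right)$ of the Lemma is a time-ordered path in $\mathrm{SU}(2)$. This ties the brachistochrone, through the commutator equation $i\frac{d}{dt}(\tilde{H}+\tilde{F})=[\tilde{H},\tilde{F}]$, to a variational problem on the compact group associated with spherical geometry.

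Next I would implement the $\mathrm{SU}(2)\leftrightarrow\mathrm{SU}(1,1)$ correspondence as the Wick rotation supplied by the Proposition. Continuing the time variable $t\mapsto it$, equivalently rotating two of the three real structure constants, sends the compact generators $\sigma_1,\sigma_2$ to the boost generators $i\sigma_1,\,i\sigma_2$ while fixing $\sigma_3$, and I would verify that the commutator equation is form-invariant under this substitution: the bracket $[\tilde{H},\tilde{F}]$ and the time derivative pick up matching factors, so the continued pair $(\tilde{H},\tilde{F})$ solves the same brachistochrone equation with generators now lying in $\mathfrak{su}(1,1)$. The continued propagator is therefore a path in $\mathrm{SU}(1,1)$, that is, a one-parameter family of pseudo-unitary matrices preserving the indefinite form $\mathrm{diag}(1,-1)$.

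The association to hyperbolic space then follows on recalling that $\mathrm{SU}(1,1)$ acts on the Poincar\'e disk by M\"obius isometries, so the time-ordered family of matrices constructed above descends to an extremal trajectory of the continued action functional on the hyperbolic plane, which is a brachistochrone in the required sense. To make the representation-theoretic content explicit I would identify the matrix elements of this $\mathrm{SU}(1,1)$ evolution with the spherical and associated functions of its unitary irreducible representations --- the principal, discrete and complementary series --- whose matrix coefficients are exactly the Mehler--Fock, Kontorovich--Lebedev and Whittaker systems recalled in the Introduction. This exhibits representation-theoretic matrices as the natural carriers of hyperbolic brachistochrones.

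I expect the main obstacle to be the loss of unitarity under the continuation. Because $\mathrm{SU}(1,1)$ preserves an indefinite form, the continued propagator is no longer norm-preserving and its unitary irreducibles are infinite-dimensional, so I must check that the continued cost functional remains real and bounded along the extremal and that the relevant matrix coefficients genuinely converge rather than merely continue formally. Justifying the interchange of analytic continuation with the variational extremisation --- so that the Wick-rotated path is a true brachistochrone and not only a formal solution of the continued equation --- is the step demanding the most care.
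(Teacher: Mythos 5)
Your proposal is correct and takes essentially the same route as the paper: the association is made by the analytic continuation $t\mapsto it$, which carries the explicit $\mathrm{SU}(2)$ brachistochrone propagator $\hat{U}_{S}(t,s)$ into (a scalar multiple $e^{-\omega t}$ of) Vilenkin's quasiunitary $\mathrm{SU}(1,1)$ matrix $\hat{u}_{2}$. The paper's own proof merely exhibits these two $2\times 2$ matrices and observes the match under complex time; the extra material you supply --- form-invariance of the commutator equation under the continuation, the loss of unitarity, and the identification of matrix coefficients with the Mehler--Fock and Whittaker systems --- is all developed in later propositions of the paper rather than in this proof.
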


\begin{proof}

The basic hyperbolic system and outline of representation theory may be found in Vilenkin \cite{vilenkin1978special}, where the authors have determined the nature of the unitary matrices used in their analysis
of SU(1,1). Following their analysis, we have e.g.
\begin{example}[Quasiunitary Operator on SU(1,1)]
    
\begin{equation}
\hat{u}_{2}=\left[\begin{array}{cc}
\cosh\dfrac{t}{2} & i\sinh\dfrac{t}{2}\\
\\
-i\sinh\dfrac{t}{2} & \cosh\dfrac{t}{2}
\end{array}\right]
\end{equation}
\end{example}
as a fundamental unitary operator in SU(1,1), whereas previous computation
in \cite{carlini2005quantum, vilenkin1978special} has shown that matrices
of the form

\begin{example}
    
\begin{equation}
\hat{U}_{S}(t,s)=e^{i\phi}\left[\begin{array}{cc}
\cos\phi & \sin\phi\\
-\sin\phi & \cos\phi
\end{array}\right]
\end{equation}
\end{example}
are the relevant concern for SU(2), with $\phi=\omega(t-s)$. Let
us examine the relation between these two groups of matrices. Obviously
we must look at extensions into complex time via $t\rightarrow it$.
    
\end{proof}
If we neglect the global phase in $\hat{U}_{S}(t,s)$, we can see
that 

\begin{lemma}
    
\begin{equation}
\hat{U}_{S}(it,0)=e^{-\omega t}\left[\begin{array}{cc}
\cosh\omega t & i\sinh\omega t\\
-i\sinh\omega t & \cosh\omega t
\end{array}\right]
\end{equation}
\end{lemma}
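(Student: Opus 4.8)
The plan is to obtain the claimed matrix by direct analytic continuation of the SU(2) propagator $\hat{U}_S(t,s)$ into imaginary time, following exactly the complexification procedure licensed by the preceding Proposition. First I would set $s=0$ in the SU(2) expression so that the single phase parameter reduces to $\phi=\omega t$, giving
\begin{equation}
\hat{U}_S(t,0)=e^{i\omega t}\left[\begin{array}{cc}
\cos\omega t & \sin\omega t\\
-\sin\omega t & \cos\omega t
\end{array}\right].
\end{equation}
Because every entry here is an entire function of $t$, the substitution $t\mapsto it$ is unambiguous and well defined; this is precisely the Wick rotation that the Proposition identifies as the passage from the spherical (SU(2)) to the hyperbolic (SU(1,1)) regime.

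Next I would apply the elementary identities $\cos(iz)=\cosh z$ and $\sin(iz)=i\sinh z$ with $z=\omega t$ to the trigonometric block, and treat the scalar prefactor separately via $e^{i(i\omega t)}=e^{-\omega t}$. Substituting these into the continued expression and collecting terms yields
\begin{equation}
\hat{U}_S(it,0)=e^{-\omega t}\left[\begin{array}{cc}
\cosh\omega t & i\sinh\omega t\\
-i\sinh\omega t & \cosh\omega t
\end{array}\right],
\end{equation}
which is the stated formula. The final step is a consistency check: the matrix block should coincide, after the evident reparametrisation of the group coordinate (so that $2\omega t$ plays the role of the parameter appearing in $\hat{u}_2$), with the quasiunitary generator $\hat{u}_2$ exhibited earlier for SU(1,1). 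This confirms that the continuation lands in the correct group and realises the SU(2)$\leftrightarrow$SU(1,1) correspondence asserted in the Corollary.

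The one point requiring genuine care --- and the part I expect to be the main obstacle --- is the bookkeeping of the overall scalar factor. In the SU(2) setting $e^{i\phi}$ is a true global phase and may be discarded without physical consequence, but under the continuation it becomes the real factor $e^{-\omega t}$, which is no longer a phase but a bona fide non-unitary rescaling reflecting the quasi-unitary, metric-preserving structure of SU(1,1) rather than the unitary structure of SU(2). I would therefore verify explicitly that the trigonometric block alone preserves the indefinite form $\mathrm{diag}(1,-1)$ and has unit determinant, so that the retained $e^{-\omega t}$ is exactly the scalar by which the continued operator fails to be unitary; this is what justifies keeping it rather than neglecting it, and it is the subtlety that makes the bare phrase \emph{neglect the global phase} slightly delicate in the hyperbolic setting.
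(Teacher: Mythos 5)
Your proposal is correct and follows essentially the same route as the paper: set $s=0$ so that $\phi=\omega t$, substitute $t\to it$ in the SU(2) propagator, and apply $\cos(i z)=\cosh z$, $\sin(iz)=i\sinh z$, $e^{i(i\omega t)}=e^{-\omega t}$, then match the block against $\hat{u}_2$. Your explicit verification that the hyperbolic block preserves $\mathrm{diag}(1,-1)$ with unit determinant, and your observation that the continued prefactor $e^{-\omega t}$ is a genuine non-unitary rescaling rather than a discardable phase, is in fact more careful than the paper's own terse treatment (which says only ``neglect the global phase'' before writing a formula that retains the factor).
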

which, up to scaling by the factor $e^{-\omega t}$, is equivalent
to the matrix $\hat{u}_{2}$ used in Vilenkin \cite{vilenkin1978special}. We
can see the nature of the transformation between spherical and hyperbolic
spaces in this formula. Let us now consider further the implications
for the action principle. Indeed, the quantum brachistochrone, by
its nature, implies optimality in terms of time taken between any
two points on the complex projective space. It seems sensible to then
conjecture that, in order to obtain the hyperbolic counterparts of
the spherical operators, we must look at optimality in terms of imaginary
time. 

\begin{proposition}
    The operators that we obtain from a hyperbolic brachistochrone are not unitary. On the hyperbolic space, one must deal with quasiunitary operators, which have a non-Hermitian symmetry. 
\end{proposition}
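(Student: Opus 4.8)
The plan is to prove non-unitarity by exhibiting the invariant bilinear form that the hyperbolic propagator actually preserves, and showing that it is the indefinite metric $J=\operatorname{diag}(1,-1)$ rather than the identity. I would start from the fact established above that the spherical propagator $\hat{U}_S(t,s)$ is generated by a Hermitian $\tilde{H}$, hence is genuinely unitary, $\hat{U}_S^{\dagger}\hat{U}_S=I$, which expresses invariance of the positive-definite form $\langle\Psi|\Psi\rangle$. The analytic continuation $t\to it$ of the preceding proposition and lemma sends this spherical rotation into the matrix $\hat{u}_2$, up to the scalar $e^{-\omega t}$, so the problem reduces to testing $\hat{u}_2$ against the two candidate invariances and showing it satisfies the indefinite one.

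The key computational step is a pair of direct $2\times 2$ products. Using $\hat{u}_2^{\dagger}=\hat{u}_2$, one computes that $\hat{u}_2^{\dagger}\hat{u}_2=\hat{u}_2^{2}$ has diagonal entries $\cosh t\neq 1$, so $\hat{u}_2^{\dagger}\hat{u}_2\neq I$ and the operator is not unitary; equivalently, the boost has real eigenvalues $e^{\pm t/2}$, which cannot lie on the unit circle for $t\neq 0$. Inserting $J$ instead, the hyperbolic identity $\cosh^{2}(t/2)-\sinh^{2}(t/2)=1$ collapses the cross terms and yields $\hat{u}_2^{\dagger}J\hat{u}_2=J$, the defining relation of SU(1,1). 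I would then recast this at the level of the generator: writing $\hat{u}_2=\exp(tY)$ with $Y=\tfrac{1}{2}\left(\begin{smallmatrix}0&i\\-i&0\end{smallmatrix}\right)$, one sees that $Y$ is \emph{Hermitian} and obeys $Y^{\dagger}J+JY=0$, whereas a unitary flow would demand an anti-Hermitian generator $iH$. This Hermitian-rather-than-anti-Hermitian generator is precisely the non-Hermitian symmetry of the statement, and it is forced by the Wick rotation, which converts the anti-Hermitian increment $-i\tilde{H}\,dt$ of the spherical problem into the Hermitian $\tilde{H}\,dt$ of the hyperbolic one.

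The main obstacle is the scalar prefactor $e^{-\omega t}$ introduced by the continuation. Since $J$-invariance is quadratic in the operator, the unrescaled matrix obeys only $\hat{U}_S(it,0)^{\dagger}J\,\hat{U}_S(it,0)=e^{-2\omega t}J\neq J$, so the continued object is merely conformally quasiunitary, and a genuine SU(1,1) element emerges only after imposing the $\det=1$ normalisation that produces $\hat{u}_2$. I would resolve this by working projectively, as the brachistochrone state $|\Psi\rangle$ is defined only up to phase and scale on the underlying projective space, so the physically meaningful representative is exactly the unit-determinant matrix for which the computation above is exact. The final point, that no normalisation can restore ordinary unitarity, then follows structurally: the real eigenvalues $e^{\pm t/2}$ are an unavoidable consequence of the signature change $(+,+)\to(+,-)$ induced by $t\to it$, and this eigenvalue dichotomy is the crux separating the compact unitary SU(2) dynamics from the non-compact quasiunitary SU(1,1) dynamics of the hyperbolic brachistochrone.
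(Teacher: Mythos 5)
Your proposal is correct — every computation in it checks out — but it takes a genuinely different route from the paper. The paper argues constructively: it continues the spherical propagator to imaginary time, uses the hyperbolic double-angle formulae to verify the composition law $\hat{\mathcal{U}}_{S}(t,s)=\hat{\mathcal{U}}_{S}(t,0)\hat{\mathcal{U}}_{S}(0,s)$ with $\hat{\mathcal{U}}_{S}(t,0)=e^{-\omega t}\hat{\omega}_{2}(\omega t)$, confirms that the factorisation $\hat{W}_{S}(it)\hat{W}_{S}^{\dagger}(-is)$ reproduces the same matrix, and reads off the failure of Hermiticity from the identity $\hat{\mathcal{U}}_{S}(0,s)=\hat{\mathcal{U}}_{S}(-s,0)$, i.e.\ the inverse is obtained by time reversal rather than by taking the adjoint; non-unitarity is thus exhibited by the explicit form of the continued operator rather than formally demonstrated. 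You instead characterise quasiunitarity intrinsically: you show $\hat{u}_{2}^{\dagger}\hat{u}_{2}=\hat{u}_{2}^{2}\neq I$ with real eigenvalues $e^{\pm t/2}$ off the unit circle, verify the defining SU(1,1) relation $\hat{u}_{2}^{\dagger}J\hat{u}_{2}=J$ for the indefinite form $J=\mathrm{diag}(1,-1)$, and trace the ``non-Hermitian symmetry'' to the generator, which the Wick rotation renders Hermitian and $J$-anti-Hermitian instead of anti-Hermitian. Your route buys a precise meaning for ``quasiunitary'' (invariance of an indefinite form), an airtight spectral proof of non-unitarity, and a more careful treatment of the conformal prefactor $e^{-\omega t}$, which the paper simply discards as scaling; the paper's route buys the explicit group composition law in imaginary time and the direct identification with Vilenkin's SU(1,1) matrices, which is what the subsequent Fubini--Study computation actually consumes. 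Neither argument is gapped, and the two are complementary.
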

\begin{proof} 

If we write $\hat{U}_{S}(it,is)=\hat{\mathcal{U}}_{S}(t,s)$,
we have time translation invariance in the form:

\[
\hat{\mathcal{U}}_{S}(t,s)=\hat{U}_{S}(it,is)=\hat{U}_{S}(i(t-s),0)
\]

\begin{equation}
=e^{-\omega(t-s)}\left[\begin{array}{cc}
\cosh[\omega(t-s)] & i\sinh[\omega(t-s)]\\
-i\sinh[\omega(t-s)] & \cosh[\omega(t-s)]
\end{array}\right]
\end{equation}
Expanding by using the double angle formulae for hyperbolic sines
and cosines, we find:

\begin{lemma}
    
\begin{equation}
\hat{\mathcal{U}}_{S}(t,s)=\hat{\mathcal{U}}_{S}(t,0)\hat{\mathcal{U}}_{S}(0,s)
\end{equation}
with
\begin{equation}
\hat{\mathcal{U}}_{S}(t,0)=e^{-\omega t}\left[\begin{array}{cc}
\cosh\omega t & i\sinh\omega t\\
-i\sinh\omega t & \cosh\omega t
\end{array}\right]
\end{equation}
\end{lemma}

Obviously we have $\hat{\mathcal{U}}_{S}(0,s)=\hat{\mathcal{U}}_{S}(-s,0)$
which means that this space is not a Hermitian space. 

Let us now examine the nature of the time evolution equation in complex time and see what can be learnt. 
as:
    The time evolution operator in the standard Hermitian space may be written in exponential form as
\begin{theorem}

\begin{equation}
\hat{U}_{S}(t,s)=\exp\left(-i\int_{s}^{t}\tilde{H}_{S}(\tau)d\tau\right)=\hat{W}_{S}(t)\hat{W}_{S}^{-1}(s)
\end{equation}
\end{theorem}

with fundamental eigenmatrix:
\begin{equation}
\hat{W}_{S}(t)=\dfrac{1}{\sqrt{2}}\left[\begin{array}{cc}
e^{2i\omega t} & i\\
ie^{2i\omega t} & 1
\end{array}\right]
\end{equation}
and Hamiltonian operator defined by

\begin{equation}
\tilde{H}_{S}(t)=R\left[\begin{array}{cc}
-\cos(2\omega t) & \sin(2\omega t)\\
\sin(2\omega t) & \cos(2\omega t)
\end{array}\right]
\end{equation}

also $\hat{W}_{S}^{-1}(t)=\hat{W}_{S}^{\dagger}(t)$. 

Let us see what happens if we perform the change into complex time. The evolution
operator will be altered in the following way:

\begin{lemma}

\[
\hat{\mathcal{U}}_{S}(t,s)=\hat{U}_{S}(it,is)=\exp\left(-i\int_{is}^{it}\tilde{H}_{S}(\tau)d\tau\right)
\]

\begin{equation}
=\exp\left(-\int_{s}^{t}\tilde{H}_{S}(i\tau)d(i\tau)\right)
\end{equation}
    
\end{lemma}
We would hope that $\hat{U}_{S}(it,is)=\hat{W}_{S}(it)\hat{W}_{S}^{\dagger}(-is)$.
Let us see if this is indeed the case. Calculating the right-hand
side, we have immediately:

\begin{theorem}
    
\begin{equation}
\hat{W}_{S}(it)\hat{W}_{S}^{\dagger}(-is)=e^{-\phi}\left[\begin{array}{cc}
\cosh\phi & i\sinh\phi\\
-i\sinh\phi & \cosh\phi
\end{array}\right]
\end{equation}
\end{theorem}
We notice that this is the same transformation we previously calculated,
with $\phi=\omega(t-s)$. The technique works effectively as can be
seen.    
\end{proof}
\begin{proposition}
    The hyperbolic equivalents of the Hamiltonian operator can also be found by analytic continuation to complex time.
\end{proposition}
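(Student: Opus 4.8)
The plan is to mirror the analytic-continuation argument already applied to the evolution operator, transporting it back from $\hat{U}_S$ to its generator $\tilde{H}_S$. First I would make explicit the generator relation implicit in the exponential representation $\hat{U}_S(t,s)=\exp(-i\int_s^t\tilde{H}_S(\tau)\,d\tau)=\hat{W}_S(t)\hat{W}_S^{-1}(s)$. Differentiating the factorized form and using $i\,\partial_t\hat{U}_S=\tilde{H}_S\hat{U}_S$ gives $\tilde{H}_S(t)=i\,\dot{\hat{W}}_S(t)\,\hat{W}_S^{-1}(t)$, so that the Hamiltonian is recovered as the logarithmic derivative of the fundamental eigenmatrix. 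This fixes $\tilde{H}_S$ unambiguously from the evolution data already in hand and, crucially, the identity is insensitive to $s$.

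Second, I would carry out the substitution $t\mapsto it$ directly in the closed form for $\tilde{H}_S$, using $\cos(2i\omega t)=\cosh(2\omega t)$ and $\sin(2i\omega t)=i\sinh(2\omega t)$, to produce the candidate continued Hamiltonian
\[
\tilde{\mathcal{H}}_S(t)=\tilde{H}_S(it)=R\left[\begin{array}{cc}-\cosh(2\omega t) & i\sinh(2\omega t)\\ i\sinh(2\omega t) & \cosh(2\omega t)\end{array}\right].
\]
I would then check consistency by forming the logarithmic derivative $i\,\partial_t\hat{W}_S(it)\,\hat{W}_S^{-1}(it)$ of the continued eigenmatrix and confirming it reproduces this $\tilde{\mathcal{H}}_S$, remembering the Jacobian factor $d(i\tau)=i\,d\tau$ that accompanies the change of variable in the integral representation; this ties the continued generator to the quasiunitary operator $\hat{\mathcal{U}}_S(t,s)=\hat{W}_S(it)\hat{W}_S^{\dagger}(-is)$ established above.

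The main obstacle is that analytic continuation destroys Hermiticity: the continued generator $\tilde{\mathcal{H}}_S$ is no longer self-adjoint in the ordinary sense, in keeping with the quasiunitary character noted in the earlier Proposition. I would therefore verify that $\tilde{\mathcal{H}}_S$ is pseudo-Hermitian with respect to the indefinite SU(1,1) metric $\eta=\mathrm{diag}(1,-1)$, that is $\eta\,\tilde{\mathcal{H}}_S\,\eta^{-1}=\tilde{\mathcal{H}}_S^{\dagger}$, which is precisely the symmetry that replaces Hermiticity on the hyperbolic space. A final technical point concerns time ordering: since $\tilde{H}_S(\tau)$ does not commute with itself at distinct times, one must confirm that continuation commutes with the ordered exponential. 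For this two-parameter family that follows from the exactness of the factorization through $\hat{W}_S$, so the ordered exponential and its continuation coincide, completing the argument.
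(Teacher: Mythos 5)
Your second paragraph is, in essence, the paper's entire proof: the paper defines $\tilde{\mathcal{H}}_{S}(t)=\tilde{H}_{S}(it)$ and evaluates it using $\cos(2i\omega t)=\cosh(2\omega t)$ and $\sin(2i\omega t)=i\sinh(2\omega t)$, arriving at exactly the matrix you display, and stops there. So the core of your argument coincides with the paper's. Your pseudo-Hermiticity remark is a correct and worthwhile addition the paper does not make: with $\eta=\mathrm{diag}(1,-1)$ one checks directly that $\eta\,\tilde{\mathcal{H}}_{S}\,\eta^{-1}=\tilde{\mathcal{H}}_{S}^{\dagger}$, which makes precise the ``quasiunitary, non-Hermitian'' language used elsewhere in the paper at the level of the generator rather than the propagator.

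One step of your plan would, however, fail if executed with the paper's matrices. The identity $\tilde{H}_{S}(t)=i\,\dot{\hat{W}}_{S}(t)\,\hat{W}_{S}^{-1}(t)$ does not hold here: a direct computation with $\hat{W}_{S}(t)=\tfrac{1}{\sqrt{2}}\left[\begin{array}{cc} e^{2i\omega t} & i\\ ie^{2i\omega t} & 1\end{array}\right]$ gives
\begin{equation}
i\,\dot{\hat{W}}_{S}(t)\,\hat{W}_{S}^{-1}(t)=\omega\left[\begin{array}{cc} -1 & i\\ -i & -1\end{array}\right],
\end{equation}
a constant matrix, not the rotating $\tilde{H}_{S}(t)=R\left[\begin{array}{cc} -\cos(2\omega t) & \sin(2\omega t)\\ \sin(2\omega t) & \cos(2\omega t)\end{array}\right]$. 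The reason is that $\tilde{H}_{S}(\tau)$ does not commute with itself at distinct times, so $\exp\left(-i\int_{s}^{t}\tilde{H}_{S}(\tau)d\tau\right)$ is not literally the propagator generated by $\tilde{H}_{S}$, and the factorization $\hat{U}_{S}(t,s)=\hat{W}_{S}(t)\hat{W}_{S}^{-1}(s)$ is a rotating-frame statement rather than a naive exponential. Consequently your closing claim that the ordered exponential and its continuation coincide ``from the exactness of the factorization'' is not justified as written. None of this threatens the proposition itself, since both you and the paper ultimately define the hyperbolic Hamiltonian by direct analytic continuation of the closed form; but the consistency check in your first paragraph should either be dropped or reformulated in the appropriate rotating frame.
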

\begin{proof}

 In terms of the Hamiltonian operator on the hyperbolic space,
we have

\begin{lemma}
    
\[
\tilde{\mathcal{H}}_{S}(t)=\tilde{H}_{S}(it)=R\left[\begin{array}{cc}
-\cos(2\omega it) & \sin(2\omega it)\\
\sin(2\omega it) & \cos(2\omega it)
\end{array}\right]
\]

\begin{equation}
=R\left[\begin{array}{cc}
-\cosh(2\omega t) & i\sinh(2\omega t)\\
i\sinh(2\omega t) & \cosh(2\omega t)
\end{array}\right]
\end{equation}

\end{lemma}

\end{proof}

Let us now consider the effect of this type of transformation on the
quantum brachistochrone equations. This set of coupled DEs is described
by the system
\begin{equation}
i\dfrac{d}{dt}(\tilde{H}(t)+\tilde{F}(t))=[\tilde{H}(t),\tilde{F}(t)]
\end{equation}
Let us examine the nature of differentiating e.g. $\dfrac{d}{dt}(\tilde{H}(it)+\tilde{F}(it))$,
the left-hand side of the above equation will then be changed to:
\begin{equation}
-\dfrac{d}{dt}(\tilde{H}(it)+\tilde{F}(it))=[\tilde{H}(it),\tilde{F}(it)]
\end{equation}
We may further transform this equation by a constant matrix $\hat{S}$,
and we would hope to find the set of coupled differential equations
defined by:
\begin{proposition}
\begin{equation}
-\dfrac{d}{dt}(\tilde{\mathcal{H}}_{S}(t)+\tilde{\mathcal{F}}_{S}(t))=[\tilde{\mathcal{H}}_{S}(t),\tilde{\mathcal{F}}_{S}(t)]
\end{equation}
\end{proposition}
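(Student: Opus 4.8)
The plan is to show that the brachistochrone equation is \emph{covariant} under conjugation by a constant invertible matrix, and that the passage from the spherical to the hyperbolic system is exactly such a conjugation composed with the continuation $t\to it$ already carried out above. The starting point is the complex-time equation displayed immediately before the Proposition,
\[
-\dfrac{d}{dt}\bigl(\tilde{H}(it)+\tilde{F}(it)\bigr)=[\tilde{H}(it),\tilde{F}(it)],
\]
in which the chain-rule sign has already been fixed. I would then define the calligraphic operators by $\tilde{\mathcal{H}}_{S}(t)=\hat{S}\,\tilde{H}(it)\,\hat{S}^{-1}$ and $\tilde{\mathcal{F}}_{S}(t)=\hat{S}\,\tilde{F}(it)\,\hat{S}^{-1}$; since $\hat{S}$ is constant this is consistent with the earlier Lemma identifying $\tilde{\mathcal{H}}_{S}(t)=\tilde{H}_{S}(it)$ once one reads $\tilde{H}_{S}=\hat{S}\tilde{H}\hat{S}^{-1}$, the subscript $S$ recording the similarity.

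The computation rests on two elementary facts. First, because $\hat{S}$ does not depend on $t$, it commutes with differentiation, so applying $\hat{S}(\cdot)\hat{S}^{-1}$ to the left-hand side gives
\[
\hat{S}\Bigl(-\dfrac{d}{dt}\bigl(\tilde{H}(it)+\tilde{F}(it)\bigr)\Bigr)\hat{S}^{-1}=-\dfrac{d}{dt}\bigl(\tilde{\mathcal{H}}_{S}(t)+\tilde{\mathcal{F}}_{S}(t)\bigr).
\]
Second, conjugation by a fixed invertible matrix is a Lie-algebra automorphism, $\hat{S}[A,B]\hat{S}^{-1}=[\hat{S}A\hat{S}^{-1},\hat{S}B\hat{S}^{-1}]$, so the right-hand side becomes $[\tilde{\mathcal{H}}_{S}(t),\tilde{\mathcal{F}}_{S}(t)]$. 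Conjugating both sides of the complex-time equation and invoking these two identities produces the Proposition immediately.

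I would stress the structural point that makes this survive the present regime. The SU(2)$\leftrightarrow$SU(1,1) map is implemented by a matrix $\hat{S}$ that is \emph{not} unitary --- it is precisely the quasiunitary character established in the earlier Proposition that forces $\hat{S}$ to lie outside $\mathrm{U}(2)$, for instance a diagonal factor of $\mathrm{diag}(1,i)$ type turning the circular trigonometric entries into hyperbolic ones. Crucially, neither identity above uses unitarity: they require only that $\hat{S}$ be constant and invertible. Hence the brachistochrone form, equivalently the von Neumann equation for $\tilde{\mathcal{H}}_{S}+\tilde{\mathcal{F}}_{S}$, persists unchanged into the non-Hermitian, quasiunitary setting.

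The main obstacle is not the algebra but the bookkeeping of well-definedness. One must check that the constraint $\tilde{F}$ admits the same analytic continuation $t\to it$ as $\tilde{H}$, so that $\tilde{\mathcal{F}}_{S}(t)$ is genuinely defined by the same recipe and the pair $(\tilde{\mathcal{H}}_{S},\tilde{\mathcal{F}}_{S})$ actually solves the continued equation rather than satisfying it only formally; and one must confirm that the factor from $\dfrac{d}{dt}\tilde{H}(it)=i\,\tilde{H}'(it)$ reproduces the minus sign of the preceding display, since a single dropped factor of $i$ would flip the overall sign. The cleanest consistency check, and where I would concentrate the detailed work, is to compare the explicit matrix $\hat{S}\tilde{H}(it)\hat{S}^{-1}$ against the closed form for $\tilde{\mathcal{H}}_{S}$ recorded in the earlier Lemma.
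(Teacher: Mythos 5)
Your two structural identities are individually correct --- a constant invertible $\hat{S}$ commutes with $d/dt$, and conjugation is a Lie-algebra automorphism, so $\hat{S}[A,B]\hat{S}^{-1}=[\hat{S}A\hat{S}^{-1},\hat{S}B\hat{S}^{-1}]$ --- but the argument fails at precisely the point you flagged and then deferred: the sign. If $(\tilde{H},\tilde{F})$ satisfies $i\frac{d}{dt}(\tilde{H}+\tilde{F})=[\tilde{H},\tilde{F}]$, i.e.\ $\tilde{H}'(s)+\tilde{F}'(s)=-i[\tilde{H}(s),\tilde{F}(s)]$, then evaluating at $s=it$ and using $\frac{d}{dt}\tilde{H}(it)=i\tilde{H}'(it)$ gives
\begin{equation}
+\dfrac{d}{dt}\bigl(\tilde{H}(it)+\tilde{F}(it)\bigr)=[\tilde{H}(it),\tilde{F}(it)],
\end{equation}
with a \emph{plus} sign. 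Conjugation by $\hat{S}$ cannot flip the sign of one side relative to the other, so with your definitions $\tilde{\mathcal{H}}_{S}=\hat{S}\tilde{H}(it)\hat{S}^{-1}$, $\tilde{\mathcal{F}}_{S}=\hat{S}\tilde{F}(it)\hat{S}^{-1}$ your covariance argument delivers the Proposition with the wrong sign. The paper's own explicit computation confirms this: the spherical system satisfies the original equation with the constant constraint $\tilde{F}_{0}=+\omega\hat{\sigma}_{y}$ (which is its own continuation), while the Proposition's minus-sign equation is satisfied by $\tilde{\mathcal{H}}_{S}(t)=\tilde{H}_{S}(it)$ paired with $\tilde{\mathcal{F}}_{0}=\Omega\hat{\sigma}_{y}$ only for $\Omega=-\omega$. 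In other words the hyperbolic constraint is the \emph{negative} of the continued spherical one, which is incompatible with your recipe $\tilde{\mathcal{F}}_{S}=\hat{S}\tilde{F}(it)\hat{S}^{-1}$.

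The paper's proof is also methodologically different from yours: it does not invoke covariance at all. It takes the closed form of $\tilde{\mathcal{H}}_{S}(t)$ already computed, posits a constant constraint $\tilde{\mathcal{F}}_{S}(t)=\Omega\hat{\sigma}_{y}$ satisfying the trace-orthogonality condition, differentiates the $2\times2$ matrix explicitly, computes the commutator explicitly, observes that both sides are proportional to the same matrix, and solves for the proportionality constant, obtaining $\Omega=-\omega$. That direct verification is where the sign is actually fixed. To salvage your approach you would need either to build the sign flip into the definition of the constraint (take $\tilde{\mathcal{F}}_{S}=-\hat{S}\tilde{F}(it)\hat{S}^{-1}$, at the cost of the clean ``same recipe'' claim) or to prove the minus-sign intermediate display independently rather than citing it; as written, the proposal proves a true statement that is not the one asserted.
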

\begin{proof}
For our matrix
$\tilde{\mathcal{H}}_{S}(t)$, we have associated constraint law $\tilde{\mathcal{F}}_{S}(t)=\Omega\hat{\sigma}_{y}$,
obviously these two operators satisfy $\mathrm{Tr}(\tilde{\mathcal{H}}_{S}(t)\tilde{\mathcal{F}}_{S}(t))=0$,
. We have $\tilde{\mathcal{F}}_{S}(t)=\tilde{\mathcal{F}}_{0}$,
so we would hope that we are able to satisfy
\begin{equation}
-\dfrac{d\tilde{\mathcal{H}}_{S}(t)}{dt}=[\tilde{\mathcal{H}}_{S}(t),\tilde{\mathcal{F}}_{0}]
\end{equation}
Calculating the left-hand side, we have 
\begin{equation}
-\dfrac{d\tilde{\mathcal{H}}_{S}(t)}{dt}=-2\omega R\left[\begin{array}{cc}
-\sinh(2\omega t) & i\cosh(2\omega t)\\
i\cosh(2\omega t) & \sinh(2\omega t)
\end{array}\right]
\end{equation}
whereas the right-hand side of this expression reads as:
\begin{equation}
[\tilde{\mathcal{H}}_{S}(t),\tilde{\mathcal{F}}_{0}]=2\Omega R\left[\begin{array}{cc}
-\sinh(2\omega t) & i\cosh(2\omega t)\\
i\cosh(2\omega t) & \sinh(2\omega t)
\end{array}\right]
\end{equation}
hence we have a solution for $\Omega=-\omega$. Q.E.D.
    
\end{proof}
 This is the appropriate
method to use to calculate a brachistochrone in a hyperbolic space.
\begin{proposition}
    The isotropic constraint is modified in a hyperbolic system from:
    
\begin{equation}
\mathrm{Tr}\left(\dfrac{\tilde{H}^{2}}{2}\right)=k<\infty
\end{equation}
to
\begin{equation}
\mathrm{Tr}\left(\dfrac{\tilde{H}^{2}}{2}\right)=-R^{2}>-\infty
\end{equation}  

\end{proposition}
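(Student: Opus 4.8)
The plan is to prove the proposition by explicit computation in the fundamental representation, tracking how the quasiunitary structure established above reverses both the sign and the direction of the bound. First I would record the spherical constraint. Using the Hamiltonian $\tilde{H}_{S}(t)$ given earlier, a direct multiplication shows that the matrix is traceless with determinant $-R^{2}$, hence has eigenvalues $\pm R$ and squares to $\tilde{H}_{S}^{2}=R^{2}I$. This yields
\begin{equation}
\mathrm{Tr}\!\left(\tfrac{1}{2}\tilde{H}_{S}^{2}\right)=R^{2}=:k,
\end{equation}
a manifestly finite, positive constant. Because $\tilde{H}_{S}$ is Hermitian the form $\tilde{H}_{S}^{2}$ is positive semidefinite, so the isotropy constraint bounds the energy band from above, $k<\infty$.

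Next I would carry out the Wick rotation $t\to it$ used throughout, replacing $\tilde{H}_{S}$ by the quasi-Hermitian operator $\tilde{\mathcal{H}}_{S}(t)=\tilde{H}_{S}(it)$ associated with the indefinite metric $g=\mathrm{diag}(1,-1)$, which satisfies $\tilde{\mathcal{H}}_{S}^{\dagger}=g\,\tilde{\mathcal{H}}_{S}\,g$ by the same identity that made $\hat{\mathcal{U}}_{S}$ quasiunitary. The key point is that the isotropy functional must now be read off with the bilinear form compatible with this metric rather than the Frobenius form: equivalently, the Wick rotation sends each energy eigenvalue $E_{n}=\pm R$ to its imaginary-time counterpart $iE_{n}$, so that $E_{n}^{2}\to(iE_{n})^{2}=-E_{n}^{2}$ and
\begin{equation}
\mathrm{Tr}\!\left(\tfrac{1}{2}\tilde{\mathcal{H}}_{S}^{2}\right)=\tfrac{1}{2}\sum_{n}(iE_{n})^{2}=-R^{2}.
\end{equation}
Since $R$ is the real radius of curvature, $-R^{2}<0$ is a genuine negative value, and the constraint now bounds the spectrum from below, $-R^{2}>-\infty$, which is exactly the reversal asserted.

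The step I expect to be the main obstacle is justifying the sign flip itself, because it is invisible to a naive matrix calculation. The characteristic polynomial of $\tilde{\mathcal{H}}_{S}$ is again $\lambda^{2}-R^{2}$, so its eigenvalues remain $\pm R$ and the bare trace $\mathrm{Tr}(\tilde{\mathcal{H}}_{S}^{2})$ is invariant under the continuation $t\to it$; the minus sign does not come for free. The substantive work is therefore to argue that the physically meaningful isotropy functional on the hyperbolic space is the one weighted by the indefinite metric $g$ (equivalently, that the boost generators of $\mathfrak{su}(1,1)$ carry the opposite Killing signature to the rotation generators of $\mathfrak{su}(2)$), so that the imaginary rotation of the energy scale $R\to iR$ is the correct reading of the constraint. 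Once this identification is secured, the computation collapses to the two displayed lines and the change from an upper to a lower bound follows immediately.
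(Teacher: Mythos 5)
Your route is genuinely different from the paper's, and the difference matters. The paper's entire proof is a one-line direct computation: substitute $t\to it$ into $\tilde{H}_{S}$, square the resulting matrix, and read off $\mathrm{Tr}(\tilde{H}^{2}/2)=-R^{2}[\cosh^{2}(2\omega t)-\sinh^{2}(2\omega t)]=-R^{2}$, with no appeal to an indefinite metric, a Killing signature, or a rotation of eigenvalues. You instead concede that the bare trace is insensitive to the continuation and try to smuggle the minus sign in through the signature of the $\mathfrak{su}(1,1)$ pairing. The gap in your proposal is that this step --- which you yourself flag as ``the substantive work'' --- is never actually carried out, and the mechanism you sketch does not compute to the stated answer. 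Writing $c=\cosh(2\omega t)$, $s=\sinh(2\omega t)$ and $g=\mathrm{diag}(1,-1)$, the weighted form gives $\tfrac{1}{2}\mathrm{Tr}\left(g\tilde{\mathcal{H}}_{S}g\tilde{\mathcal{H}}_{S}\right)=R^{2}(c^{2}+s^{2})=R^{2}\cosh(4\omega t)$, which is neither constant nor negative; and the replacement $E_{n}\to iE_{n}$ is a heuristic, not a consequence of anything established earlier (the spectrum of $\tilde{\mathcal{H}}_{S}(t)$ really is $\{\pm R\}$, as you correctly note). As written, the proposal therefore asserts the conclusion rather than deriving it.

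That said, your diagnosis of where the difficulty sits is sharp and should not be discarded. Squaring the displayed matrix gives $\tilde{\mathcal{H}}_{S}^{2}=R^{2}(c^{2}-s^{2})I=R^{2}I$, hence $\tfrac{1}{2}\mathrm{Tr}(\tilde{\mathcal{H}}_{S}^{2})=+R^{2}(c^{2}-s^{2})=+R^{2}$; the quantity that equals $-R^{2}(c^{2}-s^{2})$ is $\det\tilde{\mathcal{H}}_{S}$, which for a traceless $2\times2$ matrix is $-\tfrac{1}{2}\mathrm{Tr}(\tilde{\mathcal{H}}_{S}^{2})$. The paper's ``naive'' computation silently carries this opposite sign, and indeed the text immediately afterwards states that the trace conditions are ``essentially unchanged by moving to complex time,'' which is what the unadorned trace actually shows (both the spherical and hyperbolic Hamiltonians give $+R^{2}$ for the half-trace and $-R^{2}$ for the determinant). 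To close your argument you would need to commit explicitly to one invariant --- the determinant, or a trace form with a stated sign convention --- verify that it reduces to $\mathrm{Tr}(\tilde{H}^{2}/2)=k$ with $k=R^{2}$ in the spherical case, and then evaluate that same invariant on $\tilde{\mathcal{H}}_{S}$. Without that, the sign flip remains unproven in your version and unexplained in the paper's.
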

\begin{proof}

If we take the formula we derived above, we have 
\begin{equation}
\tilde{\mathcal{H}}_{S}(t)=\tilde{H}_{S}(it)=R\left[\begin{array}{cc}
-\cosh(2\omega t) & i\sinh(2\omega t)\\
i\sinh(2\omega t) & \cosh(2\omega t)
\end{array}\right]
\end{equation}
Naiively computing the isotropic constraint, we obtain:
\begin{equation}
\mathrm{Tr}\left(\dfrac{\tilde{H}^{2}}{2}\right)=-R^{2}[\cosh^{2}(2\omega t)-\sinh^{2}(2\omega t)]=-R^{2}>-\infty
\end{equation}      
\end{proof}
We can see that this method will be effective for finding brachistochrones
on these types of quasi-unitary spaces. Finally, we shall comment
on the nature of the action principle in a hyperbolic space. Following
Carlini et. al \cite{carlini2005quantum, carlini2006time, carlini2007time}, for a standard
time optimal quantum system, the energy dispersion is written as:
\begin{equation}
\Delta E=\sqrt{\left\langle \Psi\right|\left.\tilde{H}^{2}(t)\left|\Psi\right\rangle -\left(\left\langle \Psi\right|\tilde{H}(t)\left|\Psi\right\rangle \right)^{2}\right.}
\end{equation}
We have already shown that the trace conditions and isotropic relation
are essentially unchanged by moving to complex time. We are posed,
therefore, with the question of what exactly constitutes the state
vector in this type of hyperbolic space. One way to address this question is through the use of the ideas of projective geometry.

\section{Evaluation of Fubini-Study Metric for SU(1,1)}

Let us consider one of the unitary operators derived in the SU(1,1)
calculation. We would like to show a quick and simple method for showing
the link between the unitary operator which defines the evolution,
and the metric which defines the Laplace operator. If we can do so,
it will make many of the intermediate steps in the calculations in
\cite{vilenkin1978special} fall out as a natural consequence. We have unitary
operator given by:

\begin{lemma}
\begin{equation}
\hat{W}_{S}(it)\hat{W}_{S}^{\dagger}(-is)=e^{-\phi}\left[\begin{array}{cc}
\cosh\phi & i\sinh\phi\\
-i\sinh\phi & \cosh\phi
\end{array}\right]=e^{-\phi}\hat{\omega}_{2}(\phi)
\end{equation}
\end{lemma}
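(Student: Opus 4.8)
The plan is to establish the identity by direct evaluation of the left-hand side, substituting the explicit fundamental eigenmatrix $\hat{W}_S(t)$ recorded above and reducing the resulting matrix product to the claimed hyperbolic form. First I would compute $\hat{W}_S(it)$ by the continuation $t \mapsto it$, so that the exponential entries $e^{2i\omega t}$ become $e^{2i\omega\cdot it} = e^{-2\omega t}$, while the explicit factors of $i$ in the eigenmatrix are left untouched by this substitution.

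Next I would form $\hat{W}_S^\dagger(-is)$, and here the essential point is the order of operations: the dagger is to be read as a literal Hermitian conjugation applied after the argument has been continued to $-is$, so that $\hat{W}_S^\dagger(-is) = \bigl(\hat{W}_S(-is)\bigr)^\dagger$. Evaluating $\hat{W}_S(-is)$ sends $e^{2i\omega t} \mapsto e^{2i\omega\cdot(-is)} = e^{2\omega s}$, and the subsequent conjugation sends the explicit $i$'s to $-i$ while leaving the now-real exponentials fixed. As a consistency test I would check that the matrix so obtained coincides with the genuine inverse $\hat{W}_S^{-1}(is)$ computed from the adjugate formula; this confirms that $\hat{W}_S^{-1}(is) = \hat{W}_S^\dagger(-is)$ is the correct quasi-unitary analogue of the relation $\hat{W}_S^{-1} = \hat{W}_S^\dagger$ valid in the Hermitian regime.

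With both factors in hand I would carry out the $2\times 2$ multiplication. The two diagonal entries collapse to $\tfrac{1}{2}(1 + e^{-2\omega(t-s)})$ and the two off-diagonal entries to $\pm\tfrac{i}{2}(1 - e^{-2\omega(t-s)})$, the crucial simplifications coming from the products $i\cdot(-i) = 1$ of the explicit imaginary units. Setting $\phi = \omega(t-s)$ and recognising $\tfrac{1}{2}(1 + e^{-2\phi}) = e^{-\phi}\cosh\phi$ and $\tfrac{1}{2}(1 - e^{-2\phi}) = e^{-\phi}\sinh\phi$ then produces the overall factor $e^{-\phi}$ multiplying the hyperbolic rotation, which I would name $\hat{\omega}_2(\phi)$. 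This reproduces exactly the transformation obtained earlier by double-angle expansion, so the two derivations agree.

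The main obstacle is conceptual rather than computational: it lies in fixing the conjugation convention so that the continued exponentials combine with the argument difference $t-s$ rather than the sum $t+s$. Had the dagger instead been interpreted as analytic continuation of the real-time map $\tau \mapsto (\hat{W}_S(\tau))^\dagger$, one would obtain $e^{-2\omega(t+s)}$ in every entry and the time-translation structure $\hat{\mathcal{U}}_S(t,s) = \hat{\mathcal{U}}_S(t-s,0)$ would be destroyed. Verifying that the literal-conjugation reading agrees with the true matrix inverse $\hat{W}_S^{-1}(is)$ is therefore the step that pins down the correct interpretation and guarantees the composition law.
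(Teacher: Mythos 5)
Your proposal is correct and follows essentially the same route as the paper, which simply evaluates $\hat{W}_{S}(it)\hat{W}_{S}^{\dagger}(-is)$ by direct substitution into the explicit eigenmatrix $\hat{W}_{S}(t)=\tfrac{1}{\sqrt{2}}\bigl(\begin{smallmatrix}e^{2i\omega t} & i\\ ie^{2i\omega t} & 1\end{smallmatrix}\bigr)$ and multiplies out to obtain $e^{-\phi}\hat{\omega}_{2}(\phi)$ with $\phi=\omega(t-s)$ (the paper states this ``immediately'' without displaying the intermediate entries). Your added verification that $(\hat{W}_{S}(-is))^{\dagger}=\hat{W}_{S}^{-1}(is)$, which pins down the conjugation convention the paper leaves implicit, is a worthwhile clarification but does not change the argument.
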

For now, removing the scaling as superfluous to the dynamics on the
hyperbolic plane, we can examine purely the matrix operator, which
may be expanded thus:

\begin{equation}
\hat{\omega}_{2}(t+\tau)=\left[\begin{array}{cc}
\cosh(t+\tau) & i\sinh(t+\tau)\\
-i\sinh(t+\tau) & \cosh(t+\tau)
\end{array}\right]
\end{equation}
We know that it is possible to use a double angle formula to write:
\begin{equation}
\hat{\omega}_{2}(t+\tau)=\left[\begin{array}{cc}
\cosh(t)\cosh(\tau)+\sinh(t)\sinh(\tau) & i\left(\sinh(t)\cosh(\tau)+\cosh(t)\sinh(\tau)\right)\\
-i\left(\sinh(t)\cosh(\tau)+\cosh(t)\sinh(\tau)\right) & \cosh(t)\cosh(\tau)+\sinh(t)\sinh(\tau)
\end{array}\right]
\end{equation}
and therefore the unitary operator composes as 
\begin{lemma}
    
$\hat{\omega}_{2}(t+\tau)=\hat{\omega}_{2}(\tau)\hat{\omega}_{2}(t)$
\end{lemma}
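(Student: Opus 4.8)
The plan is to verify the composition law by a direct matrix multiplication, using the hyperbolic addition formulas to recognise the entries of the resulting product. First I would write out the two factors explicitly,
\begin{equation}
\hat{\omega}_{2}(\tau)=\left[\begin{array}{cc} \cosh\tau & i\sinh\tau \\ -i\sinh\tau & \cosh\tau \end{array}\right], \qquad \hat{\omega}_{2}(t)=\left[\begin{array}{cc} \cosh t & i\sinh t \\ -i\sinh t & \cosh t \end{array}\right],
\end{equation}
and then compute the four entries of the product $\hat{\omega}_{2}(\tau)\hat{\omega}_{2}(t)$ one at a time. The key observation is that each diagonal entry collapses to $\cosh\tau\cosh t+\sinh\tau\sinh t$, precisely because the two off-diagonal imaginary units multiply as $i\cdot(-i)=1$, converting what would naively be a subtraction into the addition demanded by $\cosh(t+\tau)=\cosh t\cosh\tau+\sinh t\sinh\tau$. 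The off-diagonal entries each factor out a common $\pm i$ multiplying $\sinh\tau\cosh t+\cosh\tau\sinh t=\sinh(t+\tau)$, reproducing the required imaginary off-diagonal structure.

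Having assembled the product in this way, I would compare it directly with the expansion of $\hat{\omega}_{2}(t+\tau)$ already displayed in the excerpt, whose entries are exactly $\cosh(t+\tau)$ on the diagonal and $\pm i\sinh(t+\tau)$ off it. Term-by-term agreement then yields the claimed identity immediately, with no appeal to anything beyond the scalar addition formulas for $\cosh$ and $\sinh$.

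The main obstacle is purely bookkeeping of the imaginary units: since the entries are not all real, one must confirm that every factor of $i$ lands where it should, so that the diagonal remains real and the off-diagonal purely imaginary, and so that the single relation $i\cdot(-i)=1$ produces the hyperbolic (rather than the trigonometric) addition law with the correct sign. As a byproduct, the explicit computation is manifestly symmetric under $t\leftrightarrow\tau$, which shows that this one-parameter family is abelian and hence that the ordering $\hat{\omega}_{2}(\tau)\hat{\omega}_{2}(t)$ written in the statement is immaterial; equivalently, $\hat{\omega}_{2}$ furnishes a homomorphism from $(\mathbb{R},+)$ into the quasiunitary matrices, confirming it defines a genuine one-parameter subgroup of $\mathrm{SU}(1,1)$.
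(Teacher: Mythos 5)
Your proposal is correct and follows essentially the same route as the paper: the paper expands $\hat{\omega}_{2}(t+\tau)$ via the hyperbolic addition formulas and recognises the result as the entry-wise product $\hat{\omega}_{2}(\tau)\hat{\omega}_{2}(t)$, which is the same computation you perform in the opposite direction. Your explicit tracking of the $i\cdot(-i)=1$ cancellation on the diagonal is exactly the point that makes the hyperbolic (rather than trigonometric) addition law appear, and the observation that the family is abelian is a harmless bonus.
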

as we would expect in terms of the group representation and composition identities. In terms of the total rotation, we therefore have
generators for SU(1,1):
\begin{lemma}
    
\begin{equation}
\hat{a}_{+}=\dfrac{1}{2}\left[\begin{array}{cc}
0 & 1\\
1 & 0
\end{array}\right]
\end{equation}
\begin{equation}
\hat{a}_{-}=\dfrac{1}{2}\left[\begin{array}{cc}
0 & i\\
-i & 0
\end{array}\right]
\end{equation}
\begin{equation}
\hat{a}_{3}=\dfrac{i}{2}\left[\begin{array}{cc}
1 & 0\\
0 & -1
\end{array}\right]
\end{equation}
\end{lemma}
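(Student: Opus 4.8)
The plan is to read the lemma as the assertion that $\hat{a}_{+},\hat{a}_{-},\hat{a}_{3}$ form a basis of the Lie algebra $\mathfrak{su}(1,1)$, so that they generate the group by exponentiation. First I would realise $SU(1,1)$ concretely as the stabiliser of the indefinite Hermitian form with Gram matrix $J=\mathrm{diag}(1,-1)$, that is, the matrices $g$ with $g^{\dagger}Jg=J$ and $\det g=1$. This is the correct ambient group here because the matrix $\hat{\omega}_{2}(\phi)$ composed above already obeys $\hat{\omega}_{2}^{\dagger}J\hat{\omega}_{2}=J$ and $\det\hat{\omega}_{2}=\cosh^{2}\phi-\sinh^{2}\phi=1$. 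Differentiating $g^{\dagger}Jg=J$ along a curve through the identity yields the infinitesimal condition $X^{\dagger}J+JX=0$ together with $\mathrm{Tr}(X)=0$, which defines $\mathfrak{su}(1,1)$. The programme is then three steps: (i) verify each $\hat{a}_{k}$ satisfies this condition; (ii) verify the three are linearly independent over $\mathbb{R}$; (iii) confirm they close under the bracket with the correct structure constants. Since $\dim_{\mathbb{R}}\mathfrak{su}(1,1)=3$, steps (i) and (ii) already force the $\hat{a}_{k}$ to be a basis, while (iii) identifies the algebra intrinsically.

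For step (i) I would substitute each matrix into $X^{\dagger}J+JX$ and check that it vanishes; the computation is immediate since every $\hat{a}_{k}$ is a scalar multiple of a Pauli-type matrix, and tracelessness is visible by inspection. It is worth anchoring one generator directly in the dynamics already derived: because $\hat{\omega}_{2}(t+\tau)=\hat{\omega}_{2}(\tau)\hat{\omega}_{2}(t)$, the family $\hat{\omega}_{2}(\phi)$ is a one-parameter subgroup whose generator is $\tfrac{d}{d\phi}\hat{\omega}_{2}(\phi)\big|_{\phi=0}=2\hat{a}_{-}$, so the brachistochrone evolution matrix is literally $\exp(2\phi\,\hat{a}_{-})$. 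Likewise $\exp(2t\,\hat{a}_{+})=\left[\begin{smallmatrix}\cosh t&\sinh t\\\sinh t&\cosh t\end{smallmatrix}\right]$ is a hyperbolic boost and $\exp(2\theta\,\hat{a}_{3})=\left[\begin{smallmatrix}e^{i\theta}&0\\0&e^{-i\theta}\end{smallmatrix}\right]$ is the elliptic direction, both checked to preserve $J$; this exhibits the two noncompact generators and the single compact Cartan generator characteristic of $SU(1,1)$ rather than $SU(2)$.

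For step (iii) I would compute the three brackets and expect to obtain $[\hat{a}_{3},\hat{a}_{+}]=\hat{a}_{-}$, $[\hat{a}_{3},\hat{a}_{-}]=-\hat{a}_{+}$, and decisively $[\hat{a}_{+},\hat{a}_{-}]=-\hat{a}_{3}$. The reversed sign in the last bracket, relative to the $\mathfrak{su}(2)$ relations, is precisely the algebraic fingerprint of the Wick rotation $t\to it$ performed earlier, and it realises $\mathfrak{su}(1,1)\cong\mathfrak{sl}(2,\mathbb{R})$ with its noncompact Cartan direction. Linear independence is clear since the symmetric off-diagonal, antisymmetric off-diagonal, and diagonal pieces occupy disjoint matrix entries.

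The arithmetic is routine; the genuine obstacle is the bookkeeping of signs and conjugations. I must make certain the form preserved by the complexified matrices is $J=\mathrm{diag}(1,-1)$ and not $-J$ or the identity, and that $\dagger$ is the ordinary adjoint rather than the $J$-twisted one, since a slip there would flip $[\hat{a}_{+},\hat{a}_{-}]=-\hat{a}_{3}$ to the compact $+\hat{a}_{3}$ and wrongly reidentify the group as $SU(2)$. The step I would treat most carefully is verifying that the exponentials remain inside $SU(1,1)$ for all real parameter values, not merely infinitesimally, since it is this global statement that certifies the three matrices generate the full noncompact group and not only a neighbourhood of the identity.
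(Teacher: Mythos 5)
Your proposal is correct, and it is considerably more rigorous than what the paper actually does: the paper offers no proof of this lemma at all, simply reading the three matrices off ``in terms of the total rotation'' once the composition law $\hat{\omega}_{2}(t+\tau)=\hat{\omega}_{2}(\tau)\hat{\omega}_{2}(t)$ has been established. Your route --- characterising $\mathfrak{su}(1,1)$ as the traceless matrices with $X^{\dagger}J+JX=0$ for $J=\mathrm{diag}(1,-1)$, checking membership and real-linear independence, and then closing the algebra under the bracket --- is a genuine proof rather than an assertion, and your anchoring of $\hat{a}_{-}$ as $\tfrac{1}{2}\tfrac{d}{d\phi}\hat{\omega}_{2}(\phi)\big|_{\phi=0}$ recovers exactly the paper's implicit reasoning as a special case. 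Your exponentials and the preservation of $J$ by $\hat{\omega}_{2}(\phi)$ all check out. One point of reconciliation: your structure constants $[\hat{a}_{3},\hat{a}_{+}]=\hat{a}_{-}$, $[\hat{a}_{3},\hat{a}_{-}]=-\hat{a}_{+}$, $[\hat{a}_{+},\hat{a}_{-}]=-\hat{a}_{3}$ are what direct computation with the standard convention $[A,B]=AB-BA$ gives, whereas the paper's subsequent corollary states $[\hat{a}_{\pm},\hat{a}_{3}]=\pm\hat{a}_{\mp}$ and $[\hat{a}_{+},\hat{a}_{-}]=\hat{a}_{3}$; the two sets differ by a uniform sign, i.e.\ the paper is implicitly using the reversed bracket. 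This does not affect the identification of the algebra (the Killing form still has signature $(+,+,-)$, so the noncompact character you emphasise survives either convention), but you should flag the convention explicitly rather than letting the reader infer it. A minor imprecision: $\hat{a}_{+}$ and $\hat{a}_{-}$ occupy the \emph{same} off-diagonal entries, so independence follows from one being real symmetric and the other $i$ times real antisymmetric, not from disjoint support.
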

and we have the SU(1,1) algebra
\begin{corollary}
    
$[\hat{a}_{\pm},\hat{a}_{3}]=\pm\hat{a}_{\mp}$,
$[\hat{a}_{+},\hat{a}_{-}]=\hat{a}_{3}$
\end{corollary}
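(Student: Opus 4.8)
The plan is to verify the two bracket relations by direct matrix multiplication, since all three generators are already given explicitly as $2\times 2$ matrices, so no abstract machinery is needed. First I would form the six ordered products $\hat{a}_+\hat{a}_3$, $\hat{a}_3\hat{a}_+$, $\hat{a}_-\hat{a}_3$, $\hat{a}_3\hat{a}_-$, $\hat{a}_+\hat{a}_-$ and $\hat{a}_-\hat{a}_+$, keeping the scalar prefactors (the $\tfrac12$'s on $\hat{a}_\pm$ and the $\tfrac{i}{2}$ on $\hat{a}_3$) outside the matrix products, so that the bookkeeping of the factors of $i$ stays transparent. Subtracting within each pair then collapses the commutator to a single $2\times 2$ matrix, which I would compare entrywise against the claimed right-hand side.

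A useful structural observation organises the arithmetic: for the $\hat{a}_3$ brackets the products $\hat{a}_\pm\hat{a}_3$ and $\hat{a}_3\hat{a}_\pm$ are purely anti-diagonal, so $[\hat{a}_\pm,\hat{a}_3]$ necessarily lands in the span of the off-diagonal generators $\hat{a}_+,\hat{a}_-$, and reading off the anti-diagonal entries fixes both the coefficient and, crucially, the sign that produces $[\hat{a}_\pm,\hat{a}_3]=\pm\hat{a}_\mp$. Conversely, both $\hat{a}_+\hat{a}_-$ and $\hat{a}_-\hat{a}_+$ are diagonal, so $[\hat{a}_+,\hat{a}_-]$ is diagonal and must be proportional to $\hat{a}_3$, with the diagonal entries fixing the coefficient. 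The one point demanding genuine care is the sign convention: depending on the placement of the $i$ in the normalisation of $\hat{a}_3$ and the orientation chosen for $\hat{a}_\mp$, the structure constants can flip sign, so I would pin the orientation of the generators against the one-parameter subgroup $\hat{\omega}_2(t)$ of the preceding lemma—whose composition law $\hat{\omega}_2(t+\tau)=\hat{\omega}_2(\tau)\hat{\omega}_2(t)$ certifies that these matrices are genuine infinitesimal generators—before declaring the signs final.

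There is no substantive obstacle here; the statement is a closure computation and the hardest part is purely clerical, namely tracking the $i$'s and overall signs so that the antisymmetry $[\hat{a}_+,\hat{a}_-]=-[\hat{a}_-,\hat{a}_+]$ is respected and the three relations are mutually consistent. As a final consistency check I would confirm that the resulting structure constants satisfy the Jacobi identity—automatic for matrix commutators, but a cheap way to catch an arithmetic slip—and that the brackets close on the span of $\{\hat{a}_+,\hat{a}_-,\hat{a}_3\}$, thereby identifying this span with the Lie algebra $\mathfrak{su}(1,1)$ as required by the preceding construction.
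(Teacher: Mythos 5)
The paper offers no proof of this corollary at all --- it is simply asserted after the lemma listing the generators --- so your plan of verifying the brackets by direct $2\times2$ matrix multiplication is the natural (indeed the only) approach, and your structural observations (the $\hat{a}_3$ brackets are anti-diagonal, the $\hat{a}_+\hat{a}_-$ bracket is diagonal) are correct. The problem is that your proposal stops exactly where the substance lies: you flag the sign convention as ``the one point demanding genuine care'' but never carry the computation through, and if you do, the statement as printed fails. With the generators as given,
\begin{equation}
[\hat{a}_+,\hat{a}_3]=\frac{i}{4}\left[\begin{array}{cc}0&-2\\2&0\end{array}\right]=\frac{1}{2}\left[\begin{array}{cc}0&-i\\i&0\end{array}\right]=-\hat{a}_-,\qquad
[\hat{a}_-,\hat{a}_3]=+\hat{a}_+,\qquad
[\hat{a}_+,\hat{a}_-]=-\frac{i}{2}\left[\begin{array}{cc}1&0\\0&-1\end{array}\right]=-\hat{a}_3,
\end{equation}
so all three structure constants come out with the opposite sign to the claim $[\hat{a}_\pm,\hat{a}_3]=\pm\hat{a}_\mp$, $[\hat{a}_+,\hat{a}_-]=\hat{a}_3$.

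Consequently a proof of the corollary as literally stated, with the convention $[A,B]=AB-BA$, does not exist; the relations hold only after a global sign fix, e.g.\ reading every bracket in the reversed order ($[\hat{a}_3,\hat{a}_\pm]=\pm\hat{a}_\mp$ and $[\hat{a}_-,\hat{a}_+]=\hat{a}_3$) or replacing $\hat{a}_3$ by its negative. Your proposed anchor --- pinning orientations against the one-parameter subgroup $\hat{\omega}_2(t)$ --- would not rescue this either, since $\hat{\omega}_2$ only fixes the generator proportional to $\hat{a}_-$; the signs of $\hat{a}_+$ and $\hat{a}_3$ would have to be pinned against $\hat{\omega}_3(\varphi)$ and a third subgroup, and doing so reproduces exactly the matrices in the lemma, for which the computation above stands. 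So the gap is concrete: you correctly identified the danger but did not execute the arithmetic that reveals the corollary needs a sign correction before it can be proved.
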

\begin{proposition}
The hyperbolic state may be evaluated by using the Euler decomposition.
\end{proposition}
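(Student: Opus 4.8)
The plan is to realise the hyperbolic state as the orbit of a fiducial vector under $SU(1,1)$ and to use the Euler (Cartan) decomposition to separate the compact stabiliser from the boost that parametrises the Poincar\'e disk. Writing a general group element as a product of one-parameter subgroups generated by the compact generator $\hat{a}_3$ and the boost $\hat{\omega}_2$, I would first establish the decomposition
\begin{equation}
\hat{g}=e^{\phi\hat{a}_3}\,\hat{\omega}_2(\theta)\,e^{\psi\hat{a}_3},
\end{equation}
with real Euler parameters $\phi,\theta,\psi$, recalling from the preceding computation that $\hat{\omega}_2(\theta)=\exp(2\theta\hat{a}_-)$ is the exponential of the boost generator. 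Surjectivity onto the group is checked by multiplying the three factors in the defining representation and matching the product against the general quasiunitary matrix $\left[\begin{smallmatrix}\alpha & \beta\\ \bar\beta & \bar\alpha\end{smallmatrix}\right]$ subject to $|\alpha|^2-|\beta|^2=1$; the two phases supply the arguments of $\alpha$ and $\beta$, while $\theta$ fixes their moduli through $\cosh\theta$ and $\sinh\theta$.

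Second, I would fix a lowest-weight reference vector $\left|0\right\rangle$ annihilated by $\hat{a}_-$ and diagonal under $\hat{a}_3$. The rightmost factor $e^{\psi\hat{a}_3}$ then acts by a pure phase and drops out of the projective state, so the hyperbolic state depends only on $\phi$ and $\theta$, that is, on a single point of the disk. To bring it into standard coherent-state form I would pass from the Euler product to the Gauss (normal-ordered) product
\begin{equation}
\hat{\omega}_2(\theta)=\exp(\zeta\hat{a}_+)\,\exp(\lambda\hat{a}_3)\,\exp(\mu\hat{a}_-),
\end{equation}
using the disentangling theorem for the algebra $[\hat{a}_+,\hat{a}_-]=\hat{a}_3$, $[\hat{a}_\pm,\hat{a}_3]=\pm\hat{a}_\mp$ stated above. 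Since $\exp(\mu\hat{a}_-)\left|0\right\rangle=\left|0\right\rangle$ and $\left|0\right\rangle$ is an eigenvector of $\hat{a}_3$, the state collapses to $N(\zeta)\exp(\zeta\hat{a}_+)\left|0\right\rangle$, with the complex disk coordinate $\zeta$ expressed through $\tanh\theta$ and the phase $\phi$.

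The main obstacle is the disentangling step: relating the Euler parameters $(\phi,\theta)$ to the single complex coordinate $\zeta$ requires the normal-ordered form of the boost exponential, and because the boost is non-compact the Baker--Campbell--Hausdorff resummation yields hyperbolic rather than trigonometric coefficients, with attendant questions of domain and convergence on the non-compact subgroup. I would handle this by verifying the disentangling identity in the defining $2\times2$ representation, where both sides are explicit matrices and $\zeta,\lambda,\mu$ are solved for by comparing entries; as the defining representation is faithful, the resulting relation between one-parameter subgroups holds in $SU(1,1)$ and hence in any representation carrying the hyperbolic state. Once $\zeta=\zeta(\phi,\theta)$ is in hand, the normalisation $N(\zeta)$ follows from the overlap $\langle 0|e^{\bar\zeta\hat{a}_-}e^{\zeta\hat{a}_+}|0\rangle$, completing the evaluation of the hyperbolic state through the Euler decomposition.
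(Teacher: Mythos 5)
Your starting point coincides with the paper's: both write the evolution operator in Euler form as a compact rotation, a hyperbolic boost, and another compact rotation, $\hat{U}(\tau,\varphi,\psi)=\hat{\omega}_{3}(\varphi)\hat{\omega}_{2}(\tau)\hat{\omega}_{3}(-\psi)$. But where the paper simply multiplies the three $2\times2$ matrices and applies the product to a generic initial vector $(a,b)^{T}$ to read off the hyperbolic state explicitly, you detour through the coherent-state machinery (lowest-weight fiducial vector, Gauss/normal-ordered decomposition, disentangling theorem). That detour is where the proposal breaks.

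The gap is that you assign two incompatible roles to $\hat{a}_{-}$. On one hand you take $\hat{\omega}_{2}(\theta)=\exp(2\theta\hat{a}_{-})$, so that $\hat{a}_{-}$ is the non-compact boost generator; indeed, with the paper's $\hat{a}_{-}=\frac{1}{2}\left[\begin{smallmatrix}0&i\\-i&0\end{smallmatrix}\right]$ one checks $\exp(\tau\hat{a}_{-})=\hat{\omega}_{2}(\tau)$. On the other hand you posit a fiducial vector $\left|0\right\rangle$ with $\hat{a}_{-}\left|0\right\rangle=0$. Together these make the construction collapse: if $\hat{a}_{-}\left|0\right\rangle=0$ then $\hat{\omega}_{2}(\theta)\left|0\right\rangle=\exp(2\theta\hat{a}_{-})\left|0\right\rangle=\left|0\right\rangle$, the boost acts trivially, the Gauss decomposition is forced to the trivial one $\zeta=\lambda=0$, $\mu=2\theta$, and the ``coherent state'' never leaves the origin of the disk. (In the defining representation the problem is starker still: $\hat{a}_{-}$ is invertible, so no nonzero vector is annihilated by it, and the $2\times2$ representation is quasiunitary rather than unitary, so the normalisation overlap must be taken with the indefinite pairing in which the bra denotes transposition, not conjugation.) The paper's $\hat{a}_{\pm}$ are two real boost generators satisfying $[\hat{a}_{+},\hat{a}_{-}]=\hat{a}_{3}$, not Cartan--Weyl ladder operators; to run your argument you would first have to pass to genuine raising and lowering operators, i.e.\ complex combinations of $\hat{a}_{+}$ and $\hat{a}_{-}$ that are eigenvectors of $\mathrm{ad}\,\hat{a}_{3}$, and only then disentangle. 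None of this is needed for the statement as the paper intends it: direct multiplication of the Euler factors and application to $(a,b)^{T}$ already evaluates the hyperbolic state.
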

\begin{proof}
We have access to the eigenstate
of the entire system, evaluation of the time evolution operator using
the Euler decomposition gives:
\begin{equation}
\hat{U}(\tau,\varphi,\psi)=\hat{\omega}_{3}(\varphi)\hat{\omega}_{2}(\tau)\hat{\omega}_{3}(-\psi)
\end{equation}
\begin{equation}
=\left[\begin{array}{cc}
e^{i\varphi/2} & 0\\
0 & e^{-i\varphi/2}
\end{array}\right]\left[\begin{array}{cc}
\cosh\dfrac{\tau}{2} & i\sinh\dfrac{\tau}{2}\\
-i\sinh\dfrac{\tau}{2} & \cosh\dfrac{\tau}{2}
\end{array}\right]\left[\begin{array}{cc}
e^{-i\psi/2} & 0\\
0 & e^{i\psi/2}
\end{array}\right]
\end{equation}
or
\begin{equation}
\hat{U}(\tau,\varphi,\psi)=\left[\begin{array}{cc}
\cosh\dfrac{\tau}{2}e^{i(\varphi-\psi)/2} & i\sinh\dfrac{\tau}{2}e^{i(\varphi+\psi)/2}\\
-i\sinh\dfrac{\tau}{2}e^{-i(\varphi+\psi)/2} & \cosh\dfrac{\tau}{2}e^{-i(\varphi-\psi)/2}
\end{array}\right]
\end{equation}
Applying this to an initial state, we have:
\begin{equation}
\left|\Psi(\tau,\varphi,\psi)\right\rangle =\hat{U}(\tau,\varphi,\psi)\left|\Psi(0)\right\rangle =\left[\begin{array}{cc}
\cosh\dfrac{\tau}{2}e^{i(\varphi+\psi)/2} & i\sinh\dfrac{\tau}{2}e^{i(\varphi-\psi)/2}\\
\\
-i\sinh\dfrac{\tau}{2}e^{-i(\varphi-\psi)/2} & \cosh\dfrac{\tau}{2}e^{-i(\varphi+\psi)/2}
\end{array}\right]\left[\begin{array}{c}
a\\
\\
b
\end{array}\right]
\end{equation}
\begin{equation}
=\left[\begin{array}{c}
a\cosh\dfrac{\tau}{2}e^{i(\varphi-\psi)/2}+ib\sinh\dfrac{\tau}{2}e^{i(\varphi+\psi)/2}\\
\\
b\cosh\dfrac{\tau}{2}e^{-i(\varphi-\psi)/2}-ia\sinh\dfrac{\tau}{2}e^{-i(\varphi+\psi)/2}
\end{array}\right]
\end{equation}    
\end{proof}

\begin{proposition}
    In the hyperbolic state space, the Fubini-Study metric is given by the Poincare metric.
\end{proposition}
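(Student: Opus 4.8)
The plan is to compute the Fubini--Study line element directly on the one--complex--dimensional family of states produced by the Euler decomposition of the previous proposition, but using the indefinite inner product that the quasiunitary structure of SU(1,1) forces upon us. Recall that the Fubini--Study metric on a ray space reads
\[
ds^2 = \frac{\langle d\Psi|d\Psi\rangle}{\langle\Psi|\Psi\rangle} - \frac{|\langle\Psi|d\Psi\rangle|^2}{\langle\Psi|\Psi\rangle^2}.
\]
Since the evolution operators established above are quasiunitary rather than unitary, the correct invariant pairing is the indefinite Hermitian form $\langle\Psi|\eta|\Psi\rangle$ with $\eta=\mathrm{diag}(1,-1)$, which is exactly the form preserved by SU(1,1). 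I would therefore replace every inner product in the formula by its $\eta$--weighted counterpart.

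First I would strip the global phase and the stabilizer direction. Acting with the Euler--decomposed operator on the reference state $(1,0)^{\top}$ collapses the three group angles to the single affine coordinate $z = \Psi_2/\Psi_1 = -i\tanh(\tau/2)\,e^{-i\varphi}$; the angle $\psi$ drops out, confirming that it labels the isotropy subgroup and not a genuine direction on the state manifold. Because $|z|=\tanh(\tau/2)<1$, the states land inside the unit disk, precisely the Poincar\'e model, and on this sheet the $\eta$--norm is the manifestly positive quantity $\langle\Psi|\eta|\Psi\rangle = 1-|z|^2$.

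Next I would evaluate the three $\eta$--pairings in the homogeneous representative $(1,z)^{\top}$, namely $\langle d\Psi|\eta|d\Psi\rangle=-|dz|^2$, $\langle\Psi|\eta|d\Psi\rangle=-\bar z\,dz$, together with the norm above, and substitute them into the $\eta$--weighted Fubini--Study formula. The $|z|^2$ terms in the numerator cancel, leaving, up to the overall sign settled below,
\[
ds^2 = \frac{|dz|^2}{(1-|z|^2)^2},
\]
the constant being fixed by the isotropy normalization $R$ inherited from the earlier trace condition. Re--expressing this in the geodesic--polar pair $(\tau,\varphi)$ gives $\tfrac14\bigl(d\tau^2+\sinh^2\!\tau\,d\varphi^2\bigr)$, the hyperbolic line element in polar form, which identifies the Fubini--Study metric with the Poincar\'e metric and completes the argument.

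The main obstacle I anticipate is sign and normalization bookkeeping rather than any deep difficulty. The indefinite form makes $\langle d\Psi|\eta|d\Psi\rangle$ negative, so one must verify that the two negative contributions in the Fubini--Study expression reinforce into a positive--definite Riemannian metric rather than cancelling or producing a Lorentzian signature; this is where the choice of sheet $|z|<1$ and the sign of $\eta$ must be kept consistent throughout, and it is what converts the naive $-|dz|^2/(1-|z|^2)^2$ into the genuine Poincar\'e metric. The remaining subtlety is matching the constant factor --- the curvature $-1$ normalization against the factor of four that emerges from the half--angle $\tau/2$ in the SU(1,1) parametrization --- which I would absorb into the definition of $R$.
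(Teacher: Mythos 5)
Your proposal is correct, but it takes a genuinely different route from the paper, and the two routes do not quite land on the same tensor. The paper keeps all three Euler angles, defines the bra by transposition with reflected angles, $\left\langle \bar{\Psi}(\tau,\varphi,\psi)\right| = \left|\Psi(\tau,-\varphi,-\psi)\right\rangle^{T}$ (explicitly \emph{not} complex conjugation), assembles the degenerate $3\times 3$ tensor $F_{\alpha\beta}$ from the pairings $\langle\bar\Psi_\alpha|\Psi_\beta\rangle$ and the projector, and takes the real part to obtain $g_{\alpha\beta}=\tfrac14\,\mathrm{diag}(-1,\sinh^2\tau,0)$, i.e. the \emph{indefinite} line element $ds^2=\tfrac14(-d\tau^2+\sinh^2\tau\,d\varphi^2)$; that minus sign is not an accident of convention there, since the next section feeds $\det g_{\alpha\beta}=-\sinh^2\tau$ and $g^{\tau\tau}=-1$ into the Laplace--Beltrami operator. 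You instead quotient immediately to the inhomogeneous coordinate $z=\Psi_2/\Psi_1=-i\tanh(\tau/2)e^{-i\varphi}$ (your observation that $\psi$ cancels is right and matches the vanishing third row/column of the paper's $g_{\alpha\beta}$), use the SU(1,1)-invariant Hermitian form $\eta=\mathrm{diag}(1,-1)$, and recover $|dz|^2/(1-|z|^2)^2=\tfrac14(d\tau^2+\sinh^2\tau\,d\varphi^2)$ after the standard overall sign flip for a pseudo-K\"ahler quotient. Your computation checks out (including the Jacobian $1-|z|^2=\mathrm{sech}^2(\tau/2)$ and $\sinh^2(\tau/2)\cosh^2(\tau/2)=\tfrac14\sinh^2\tau$), and it is the cleaner, manifestly coordinate-free argument that produces the genuinely Riemannian Poincar\'e disk metric. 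What it does not reproduce is the paper's sign on the $d\tau^2$ term: your $\eta$-pairing makes all components pick up the same sign (hence one global flip suffices), whereas the paper's transpose pairing flips only the $\tau\tau$ component. If you adopt your version, be aware that the downstream Laplacian in the paper is computed from the indefinite tensor, so the two derivations diverge at exactly the point you flagged as "sign bookkeeping" --- it is worth stating explicitly which invariant form is being used, since that choice, not a normalization, is what separates the two answers.
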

\begin{proof}
    
Let us consider the Fubini-Study metric. We know that there will be
a relationship with the metric familiar from differential geometry.
In the space of states, we have expectation values:
\begin{theorem}
\begin{equation}
\left\langle \Psi(\tau,\varphi,\psi)\right|\left.\hat{A}_{m}\hat{A}_{n}\right.\left|\Psi_{i}(0)\right\rangle =\left\langle \hat{A}_{m}\hat{A}_{n}\right\rangle 
\end{equation}
\end{theorem}
We may use this relationship to project out each component in turn.
Let us do so for each of the operators that define the algebra. The
Fubini-Study metric tensor is given by:
\begin{theorem}
    \begin{equation}
g_{\alpha\beta}=\mathfrak{Re}\left[\left\langle \bar{\psi}_{\alpha}\right|\left.\psi_{\beta}\right\rangle -\left\langle \bar{\psi}_{\alpha}\right|\left.\Psi\right\rangle \left\langle \Psi\right|\left.\psi_{\beta}\right\rangle \right]
\end{equation}
\begin{equation}
\left|\psi_{\beta}\right\rangle =\dfrac{\partial}{\partial x_{\beta}}\left|\Psi\right\rangle 
\end{equation}
\end{theorem}

\begin{theorem}
    
For our particular choice of quasiunitary operator, we have
the adjoint equation given by:
\begin{equation}
\hat{U}^{-1}(\tau,\varphi,\psi)=\hat{U}^{T}(\tau,-\varphi,-\psi)
\end{equation}
such that $\left\langle \bar{\Psi}(\tau,\varphi,\psi)\right|=\left\langle \bar{\Psi}(0)\right|\hat{U}^{T}(\tau,-\varphi,-\psi)=\left\langle \Psi(\tau,-\varphi,-\psi)\right|$.
\end{theorem}

We make special note that in this space the bra-ket notation signifies
only transposition and not the composition of conjugation, as this
is a hyperbolic space. Evaluating, we then have differential state
components:
\begin{equation}
\left|\Psi\right\rangle =\left|\Psi(\tau,\varphi,\psi)\right\rangle =\hat{U}(\tau,\varphi,\psi)\left|\Psi(0)\right\rangle 
\end{equation}
If we take the initial state to be given by $\left[\begin{array}{c}
1\\
0
\end{array}\right],$we may read off the necessary variables as:
\begin{lemma}
    
\begin{equation}
\left|\Psi\right\rangle =\left[\begin{array}{c}
\cosh\dfrac{\tau}{2}e^{i(\varphi+\psi)/2}\\
\\
-i\sinh\dfrac{\tau}{2}e^{-i(\varphi-\psi)/2}
\end{array}\right]
\end{equation}
\end{lemma}
we then have:
\begin{lemma}
    
\begin{equation}
\left|\Psi_{\tau}\right\rangle =\dfrac{\partial}{\partial\tau}\left|\Psi\right\rangle =\dfrac{1}{2}\left[\begin{array}{c}
\sinh\dfrac{\tau}{2}e^{i(\varphi+\psi)/2}\\
\\
-i\cosh\dfrac{\tau}{2}e^{-i(\varphi-\psi)/2}
\end{array}\right]
\end{equation}
\begin{equation}
\left|\Psi_{\varphi}\right\rangle =\dfrac{\partial}{\partial\varphi}\left|\Psi\right\rangle =\dfrac{1}{2}\left[\begin{array}{c}
i\cosh\dfrac{\tau}{2}e^{i(\varphi+\psi)/2}\\
\\
-\sinh\dfrac{\tau}{2}e^{-i(\varphi-\psi)/2}
\end{array}\right]
\end{equation}
\begin{equation}
\left|\Psi_{\psi}\right\rangle =\dfrac{\partial}{\partial\psi}\left|\Psi\right\rangle =\dfrac{1}{2}\left[\begin{array}{c}
i\cosh\dfrac{\tau}{2}e^{i(\varphi+\psi)/2}\\
\\
\sinh\dfrac{\tau}{2}e^{-i(\varphi-\psi)/2}
\end{array}\right]
\end{equation}
\end{lemma}
Using the formula for the adjoint $\left\langle \bar{\Psi}(\tau,\varphi,\psi)\right|=\left|\Psi(\tau,-\varphi,-\psi)\right\rangle ^{T}$,
we may now assemble all the necessary parts of the Fubini-Study metric.
The projection operator may be written:
\begin{theorem}
    
\begin{equation}
\hat{P}(\tau,\varphi,\psi)=\left|\Psi(\tau,\varphi,\psi)\right.\left\rangle \right\langle \left.\bar{\Psi}(\tau,\varphi,\psi)\right|=\left[\begin{array}{cc}
\cosh^{2}\dfrac{\tau}{2} & -ie^{i\varphi}\cosh\dfrac{\tau}{2}\sinh\dfrac{\tau}{2}\\
\\
-ie^{-i\varphi}\cosh\dfrac{\tau}{2}\sinh\dfrac{\tau}{2} & -\sinh^{2}\dfrac{\tau}{2}
\end{array}\right]
\end{equation}
\end{theorem}

with $\mathrm{Tr}\hat{P}=1$.
 The formula for the Fubini-Study metric
reads as:
\begin{equation}
F_{\alpha\beta}=\left\langle \bar{\Psi}_{\alpha}\right|\left.\Psi_{\beta}\right\rangle -\left\langle \bar{\Psi}_{\alpha}\right|\left.\Psi\right\rangle \left\langle \Psi\right|\left.\Psi_{\beta}\right\rangle 
\end{equation}
\begin{equation}
g_{\alpha\beta}=\mathfrak{Re}\left[F_{\alpha\beta}\right]
\end{equation}
Computing, we find the tensor:
\begin{theorem}

\begin{equation}
F_{\alpha\beta}=\dfrac{1}{4}\left[\begin{array}{ccc}
-1 & i\sinh\tau & 0\\
i\sinh\tau & \dfrac{1}{2}(\cosh2\tau-1) & 0\\
0 & 0 & 0
\end{array}\right]
\end{equation}
or
\begin{equation}
g_{\alpha\beta}=\dfrac{1}{4}\left[\begin{array}{ccc}
-1 & 0 & 0\\
0 & \sinh^{2}\tau & 0\\
0 & 0 & 0
\end{array}\right]
\end{equation}
    
\end{theorem}
We therefore derive the second fundamental form to be equal to:
\begin{theorem}
    
\begin{equation}
ds^{2}=\dfrac{1}{4}(-d\tau^{2}+\sinh^{2}\tau d\varphi^{2})
\end{equation}
\end{theorem}
which is equal to the metric for the hyperbolic plane model of Poincare.
\end{proof}
We can see how the theorems from the previous section that derived
the metric in a curvilinear space may be immediately applied, and
how a change to the dimension of the sphere will add an extra radial
component into the equations as stated. We can see that this method
of analysis is much simpler in nature than the method of group representations.
By moving directly from the unitary transformation to a Laplacian
operator with a symmetry group of the same type, we remove all the
difficulties of solving coupled equations. We shall now discuss the
results and unsolved problems which are related to this analysis.

\section{Calculation of Hyperbolic Laplacian}
\begin{proposition}
    The hyperbolic Laplacian associated to the Fubini-Study metric describes a set of eigenfunctions that are given by the Mehler-Fock functions of the hyperbolic distance.
\end{proposition}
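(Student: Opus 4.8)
The plan is to construct the Laplace--Beltrami operator directly from the metric tensor $g_{\alpha\beta}$ obtained in the previous section and then solve the resulting eigenvalue problem by separation of variables, identifying the radial eigenfunctions with the conical (Mehler--Fock) Legendre functions $P_{-1/2+is}(\cosh\tau)$. Starting from the Riemannian representative of the line element derived above, $ds^2=\tfrac14(d\tau^2+\sinh^2\tau\,d\varphi^2)$, with $\tau$ playing the role of the geodesic (hyperbolic) distance from the base point, I would read off $\sqrt{|g|}\propto\sinh\tau$ together with $g^{\tau\tau}$ and $g^{\varphi\varphi}$, and insert them into the standard invariant expression $\Delta f=\tfrac{1}{\sqrt{|g|}}\partial_\alpha\!\left(\sqrt{|g|}\,g^{\alpha\beta}\partial_\beta f\right)$. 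This yields
\[
\Delta=\frac{1}{\sinh\tau}\frac{\partial}{\partial\tau}\left(\sinh\tau\,\frac{\partial}{\partial\tau}\right)+\frac{1}{\sinh^2\tau}\frac{\partial^2}{\partial\varphi^2},
\]
up to the constant conformal factor inherited from the metric, which only rescales the eigenvalues.

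Next I would separate variables, writing $f(\tau,\varphi)=R(\tau)e^{im\varphi}$. The angular factor is forced by single-valuedness and contributes the separation constant $-m^2$, reducing the problem to the radial Sturm--Liouville equation
\[
\frac{1}{\sinh\tau}\frac{d}{d\tau}\left(\sinh\tau\,\frac{dR}{d\tau}\right)-\frac{m^2}{\sinh^2\tau}R=\lambda R.
\]
Setting $\lambda=-\left(\tfrac14+s^2\right)$ and substituting $x=\cosh\tau$ transforms this into the associated Legendre equation of degree $\nu=-\tfrac12+is$ and order $m$; the solutions regular at the origin $\tau=0$ (i.e.\ $x=1$) are precisely the conical functions $P^{m}_{-1/2+is}(\cosh\tau)$, which are the Mehler--Fock functions of the hyperbolic distance, with the zonal case $m=0$ giving $P_{-1/2+is}(\cosh\tau)$ directly. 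I would then verify that the continuous parameter $s\in[0,\infty)$ indexes the spectrum of $\Delta$ and that the associated expansion is inverted by the Mehler--Fock transform, which closes the identification of the ``system'' of eigenfunctions.

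The main obstacle I anticipate is twofold. First, the metric as derived carries an indefinite (Lorentzian) signature from the $-d\tau^2$ term, so I must justify passing to the positive-definite hyperbolic metric --- either via the Wick-rotation argument of Section~2 that already underlies the whole construction, or by absorbing the sign into the definition of $\lambda$; this choice controls whether the spectrum is labelled by real or imaginary $s$ and must be carried through consistently. Second, and more delicate, is the completeness claim implicit in calling these functions a \emph{system}: showing that $\{P_{-1/2+is}(\cosh\tau)\}$ is complete requires the singular Sturm--Liouville (Weyl--Titchmarsh) theory for the operator on $(0,\infty)$, since $\tau=\infty$ is a limit-point endpoint and the spectrum is purely continuous. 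Here I would invoke the Titchmarsh--Kodaira results cited in \cite{sousa2017spectral} rather than reprove them, checking only that the endpoint classification and deficiency indices match those of the Mehler--Fock problem. The reduction to Legendre's equation is routine; it is the signature bookkeeping and the appeal to the continuous spectral resolution that demand the real care.
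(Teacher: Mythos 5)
Your proposal is correct and follows the same overall strategy as the paper: read off the Laplace--Beltrami operator from the Fubini--Study metric, reduce in the angular variable, and identify the radial equation as the conical (Mehler--Fock) form of Legendre's equation. The routes diverge in two places worth noting. First, you resolve the signature issue up front by passing to the positive-definite Riemannian representative $\tfrac14(d\tau^2+\sinh^2\tau\,d\varphi^2)$, whereas the paper keeps the indefinite form $g_{\tau\tau}=-1$, writes $\sqrt{g}=i\sinh\tau$, and lets the factors of $i$ cancel inside the Laplacian; your bookkeeping is the cleaner of the two, since the paper's resulting operator $-\partial_\tau^2-\coth\tau\,\partial_\tau+\sinh^{-2}\tau\,\partial_\varphi^2$ carries a relative sign between the radial and angular parts that is only repaired later by the relabelling $k\to-ik$, $\xi\to i\xi$. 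Second, you substitute $x=\cosh\tau$ immediately and land on $P^{m}_{-1/2+is}(\cosh\tau)$ directly, while the paper first writes the solution as $\frac{1}{\sqrt{\sinh\tau}}\,\mathcal{P}^{\xi}_{ik-1/2}(\coth\tau)$ and then invokes the Whipple relations to convert to argument $\cosh\tau$; your shortcut buys economy, the paper's detour buys the explicit dictionary between the $\coth\tau$- and $\cosh\tau$-argument representations that it reuses in the later sections on Whittaker and Macdonald functions. A minor further difference is that you impose single-valuedness to get a discrete angular index $m$, whereas the paper Fourier-transforms over $\varphi\in(-\infty,\infty)$ with continuous $k$; and for completeness the paper simply quotes the Grosche--Steiner orthogonality relations rather than developing the Weyl--Titchmarsh endpoint analysis you sketch, so your treatment of that point is, if anything, more careful than the source.
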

\begin{proof}
    
Let us consider first the parametrisation of the coordinates on the
hyperbolic surface. The metric coming from the previous calculation
may be written as:
\begin{equation}
4ds^{2}=g_{\alpha\beta}dx^{\alpha}dx^{\beta}
\end{equation}
\begin{equation}
g_{\alpha\beta}=\left[\begin{array}{cc}
-1 & 0\\
0 & \sinh^{2}\tau
\end{array}\right]
\end{equation}
The Laplace-Beltrami operator in a curved space is given by:
\begin{equation}
\nabla^{2}f=\dfrac{1}{\sqrt{g}}\dfrac{\partial}{\partial x_{\alpha}}\left(\sqrt{g}g^{\alpha\beta}\dfrac{\partial f}{\partial x_{\beta}}\right)
\end{equation}
We have e.g. $g=\det g_{\alpha\beta}=-\sinh^{2}\tau$, and contravariant
metric given by:
\begin{equation}
g^{\alpha\beta}=\left[\begin{array}{cc}
-1 & 0\\
0 & \dfrac{1}{\sinh^{2}\tau}
\end{array}\right]
\end{equation}
so we conclude that the Laplace operator is given the formula:
\begin{equation}
\nabla^{2}f=\dfrac{1}{i\sinh\tau}\sum_{\alpha,\beta}\dfrac{\partial}{\partial x_{\alpha}}\left(i\sinh\tau g^{\alpha\beta}\dfrac{\partial f}{\partial x_{\beta}}\right)=\dfrac{1}{\sinh\tau}\sum_{\alpha,\beta}\dfrac{\partial}{\partial x_{\alpha}}\left(\sinh\tau g^{\alpha\beta}\dfrac{\partial f}{\partial x_{\beta}}\right)
\end{equation}
The only non-zero components are $g^{\tau\tau}$ and $g^{\varphi\varphi}$,
hence we find:
\begin{equation}
\nabla^{2}f=\dfrac{1}{\sinh\tau}\dfrac{\partial}{\partial\tau}\left(\sinh\tau.g^{\tau\tau}\dfrac{\partial f}{\partial\tau}\right)+\dfrac{1}{\sinh\tau}\dfrac{\partial}{\partial\varphi}\left(\sinh\tau.g^{\varphi\varphi}\dfrac{\partial f}{\partial\varphi}\right)
\end{equation}
\begin{equation}
=-\dfrac{1}{\sinh\tau}\dfrac{\partial}{\partial\tau}\left(\sinh\tau\dfrac{\partial f}{\partial\tau}\right)+\dfrac{1}{\sinh^{2}\tau}\dfrac{\partial^{2}f}{\partial\varphi^{2}}
\end{equation}
\begin{equation}
=-\dfrac{\partial^{2}f}{\partial\tau^{2}}-\coth\tau\dfrac{\partial f}{\partial\tau}+\dfrac{1}{\sinh^{2}\tau}\dfrac{\partial^{2}f}{\partial\varphi^{2}}
\end{equation}

We may associate the following standard differential systems with the curvilinear Laplace operator, firstly the solution of the Laplace equation:
\begin{equation}
\nabla^{2}f=0
\end{equation}The Helmholtz equation:
\begin{equation}
\nabla^{2}f=\lambda f
\end{equation}
The quantum equation:
\begin{equation}
E\Psi=-\nabla^{2}\Psi+V(\tau,\varphi)\Psi
\end{equation}
Writing out the last of these explicitly for the Helmholtz case i.e.
zero potential, we have:

\begin{corollary}
    
\begin{equation}
E\Psi=-\dfrac{\partial^{2}\Psi}{\partial\tau^{2}}-\coth\tau\dfrac{\partial\Psi}{\partial\tau}+\dfrac{1}{\sinh^{2}\tau}\dfrac{\partial^{2}\Psi}{\partial\varphi^{2}}
\end{equation}
\end{corollary}
Now, consider the solution $\Psi(\tau,\varphi)$. We may Fourier transform
in the second variable, writing $\Upsilon(\tau,k)=\int_{-\infty}^{+\infty}e^{-ik\varphi}\Psi(\tau,\varphi)d\varphi$.
Then the differential equation becomes:
\begin{equation}
E\Upsilon=-\dfrac{\partial^{2}\Upsilon}{\partial\tau^{2}}-\coth\tau\dfrac{\partial\Upsilon}{\partial\tau}-\dfrac{k^{2}}{\sinh^{2}\tau}\Upsilon
\end{equation}
\begin{equation}
\dfrac{\partial^{2}\Upsilon}{\partial\tau^{2}}+\coth\tau\dfrac{\partial\Upsilon}{\partial\tau}+\left(\dfrac{k^{2}}{\sinh^{2}\tau}-E\right)\Upsilon=0
\end{equation}
The solution to this differential equation at fixed $k$ is given
by:
\begin{equation}
\Upsilon(\tau,k)=\dfrac{1}{\sqrt{\sinh\tau}}\left(C_{1}\mathcal{P}_{ik-1/2}^{\sqrt{4E+1}/2}\left(\coth\tau\right)+C_{2}\mathcal{Q}_{ik-1/2}^{\sqrt{4E+1}/2}\left(\coth\tau\right)\right)
\end{equation}
Therefore, by choosing $E=-\dfrac{1}{4}+\xi^{2}$, we obtain:
\begin{theorem}[Hyperbolic Eigenfunctions]
\begin{equation}
\Upsilon(\tau,k)=\dfrac{1}{\sqrt{\sinh\tau}}\left(C_{1}\mathcal{P}_{ik-1/2}^{\xi}\left(\coth\tau\right)+C_{2}\mathcal{Q}_{ik-1/2}^{\xi}\left(\coth\tau\right)\right)
\end{equation}
    
\end{theorem}

and further using the Whipple formulae. 
\begin{equation}
\dfrac{A_{k\rho}}{\sqrt{\sinh\tau}}\mathcal{P}_{k-1/2}^{i\rho}(\coth\tau)=\mathcal{Q}_{i\rho-1/2}^{k}(\cosh\tau)
\end{equation}
\begin{equation}
\dfrac{B_{k\rho}}{\sqrt{\sinh\tau}}\mathcal{Q}_{k-1/2}^{i\rho}(\coth\tau)=\mathcal{P}_{i\rho-1/2}^{k}(\cosh\tau)
\end{equation}
\begin{equation}
\Upsilon(\tau,k)=\mathcal{C}_{1}\mathcal{P}_{\xi-1/2}^{ik}\left(\cosh\tau\right)+\mathcal{C}_{2}\mathcal{Q}_{\xi-1/2}^{ik}\left(\cosh\tau\right)
\end{equation}
This satisfies the differential equation by inspection. It is obvious
to see how the various different representations may be linked up.
Let us calculate the form of the completeness relationship, and then
give the solution to the original differential equation. 

If we take
$k\rightarrow-ik$, $\xi\rightarrow i\xi$, we get differential equation:
\begin{theorem}
    \begin{equation}
\dfrac{\partial^{2}\Upsilon}{\partial\tau^{2}}+\coth\tau\dfrac{\partial\Upsilon}{\partial\tau}+\left(-\dfrac{k^{2}}{\sinh^{2}\tau}+\left(\dfrac{1}{4}+\xi^{2}\right)\right)\Upsilon=0
\end{equation}
\end{theorem} 

with solutions:
\begin{equation}
\Upsilon(\tau,k)=\mathcal{C}_{1}\mathcal{P}_{i\xi-1/2}^{k}\left(\cosh\tau\right)+\mathcal{C}_{2}\mathcal{Q}_{i\xi-1/2}^{k}\left(\cosh\tau\right)
\end{equation} Q.E.D.

\end{proof}

We can see that there will be various forms of solution to the original
differential equation, which will be given by Fourier inversion:
\begin{equation}
\Psi(\tau,\varphi)=\int_{-\infty}^{+\infty}dk.e^{-ik\varphi}\int\Upsilon_{\xi}(\tau,k)d\mathfrak{m}(\xi)
\end{equation}
and the kernels and Green's functions will be of the form:
\begin{equation}
K(x,y;t)=\int_{-\infty}^{+\infty}dk.e^{-ik\varphi}\int e^{-(1/4+\xi^{2})t}\Upsilon_{\xi}(x,k)\Upsilon_{\xi}(y,k)d\mathfrak{m}(\xi)
\end{equation}
\begin{equation}
G(x,y;E)=\int e^{-Et}K(x,y;t)dt
\end{equation}
For the last system, we have a completeness relationship from Grosche
\& Steiner \cite{grosche1987path, grosche1990path, grosche1998handbook}:
\begin{equation}
\left|\dfrac{\Gamma\left(1/2+ip-\mu\right)}{\Gamma(ip)}\right|^{2}\int_{1}^{\infty}\mathcal{P}_{ip-1/2}^{\mu}\left(y\right)\mathcal{P}_{-ip'-1/2}^{\mu}\left(y\right)dy=\delta(p-p')
\end{equation}
\begin{equation}
\int_{0}^{\infty}\left|\dfrac{\Gamma\left(1/2+ip-\mu\right)}{\Gamma(ip)}\right|^{2}\mathcal{P}_{ip-1/2}^{\mu}\left(x\right)\mathcal{P}_{ip-1/2}^{\mu}\left(y\right)dp=\delta(x-y)
\end{equation}

We shall now discuss some related systems of differential equations.
The hyperbolic plane or disk can be seen as a projection from a higher
dimensional object into the plane via a homomorphism. A simple way
of understanding this is in terms of projection from a sphere into
the plane. Each line of latitude gives a circle, which is unique for
the lower half-sphere. In this way, one may identify a point on the
lower half-sphere with a circle and a rotational angle in the plane.
We shall be using this sort of analogy in order to show some further
links between special functions and these types of transformation
laws. By using the completeness relationship between the various sets
of eigenfunctions, we will be able to develop a detailed understanding
of this system. We have shown the underlying link between the geometry
of SU(1,1) and the sets of special functions which are eigenfunctions
of the Laplacian operator. By using the Fubini-Study metric, and observing
that the brachistochrone is geodesic, one can write down the formulae
for the fundamental form, and hence the Laplacian. This is a powerful
technique which bypasses many of the technical difficulties associated
with applying representation theory directly. Indeed, the representation
theory can be seen as derivate, albeit in a rotated frame, to the
existence of a well-ordered geodesic system as described on the manifold.

\section{Related Groups}

\begin{proposition}
    The eigenfunctions on the pseudosphere can be described through an identical process using a hyperbolic metric.
\end{proposition}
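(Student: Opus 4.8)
The plan is to run the construction of Sections 3 and 4 essentially unchanged, the only genuinely new input being a geometric one: the pseudosphere is a surface of constant negative Gaussian curvature $K=-1$, which is exactly the curvature of the hyperbolic plane whose metric emerged from the Fubini--Study calculation. By Minding's theorem, any two surfaces of equal constant curvature are locally isometric, so the pseudosphere carries, in suitable coordinates, precisely the hyperbolic metric already in hand, and the Laplace--Beltrami operator---being an intrinsic invariant of the metric---is transported unchanged by that isometry.

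First I would write the first fundamental form of the pseudosphere directly from its embedding as the surface of revolution of the tractrix. In the natural horocyclic coordinates this reads $ds^2 = du^2 + e^{-2u}\,dv^2$ with $v$ periodic, and I would then exhibit the explicit change of variables carrying it to the geodesic-polar form $ds^2 = d\tau^2 + \sinh^2\tau\,d\varphi^2$ used above; Minding's theorem guarantees such a map exists. Carrying this out concretely confirms that $\sqrt{g}$ and the contravariant components $g^{\tau\tau}$, $g^{\varphi\varphi}$ reproduce the same $\tau$-dependence as in the disk case, so that the operator collapses to
\[
\nabla^2 f = -\dfrac{\partial^2 f}{\partial\tau^2} - \coth\tau\,\dfrac{\partial f}{\partial\tau} + \dfrac{1}{\sinh^2\tau}\,\dfrac{\partial^2 f}{\partial\varphi^2},
\]
whereupon the Fourier transform in $\varphi$ together with the choice $E=-\tfrac14+\xi^2$ reproduces the associated Legendre equation verbatim, with Mehler--Fock solutions $\mathcal{P}_{ik-1/2}^{\xi}(\coth\tau)$ and $\mathcal{Q}_{ik-1/2}^{\xi}(\coth\tau)$ and their Whipple images $\mathcal{P}_{\xi-1/2}^{ik}(\cosh\tau)$.

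It is worth noting that if one instead separates variables directly in the horocyclic coordinates $(u,v)$, the radial equation presents as a Bessel-type equation whose solutions are Macdonald functions rather than Legendre functions; these two representations are interchangeable precisely through the transform machinery established earlier, which is entirely consistent with the transmutability of systems that is the theme of this paper.

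The hard part will not be this local calculation but the global spectral theory. The pseudosphere is geodesically incomplete---it covers only a horocyclic region of the hyperbolic plane and terminates in a cuspidal edge---and its fundamental coordinate is compactified. Consequently the formal agreement of the differential operators does not by itself deliver the same eigenfunction expansion: one must impose boundary conditions at the cusp, select a self-adjoint extension of $\nabla^2$, and re-derive the completeness relation with the integration restricted to the pseudosphere's range. I expect the compactification to discretize the transverse mode index and the cusp to contribute discrete bound states supplementing the continuous Mehler--Fock spectrum, so that the orthogonality relations of Grosche \& Steiner quoted above would require the spectral measure $d\mathfrak{m}(\xi)$ to be modified to this truncated, partially discrete setting.
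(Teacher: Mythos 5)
Your proposal is built on a different object than the one the paper actually uses. By ``pseudosphere'' the paper means the hyperboloid $z^{2}-x^{2}-y^{2}=r^{2}$ sitting in Minkowski $3$-space (the pseudo-Euclidean analogue of the sphere), parametrised by $x=r\sinh\tau\cos\varphi$, $y=r\sinh\tau\sin\varphi$, $z=r\cosh\tau$. The paper's proof is then a short direct computation: the induced indefinite metric is $ds^{2}=dr^{2}-r^{2}d\tau^{2}-r^{2}\sinh^{2}\tau\,d\varphi^{2}$, with $\det g_{\alpha\beta}=r^{4}\sinh^{2}\tau$, and the three-dimensional Laplace--Beltrami operator splits into a radial piece $\tfrac{1}{r^{2}}\partial_{r}(r^{2}\partial_{r}f)$ plus $-\tfrac{1}{r^{2}}$ times the two-dimensional hyperbolic Laplacian you wrote down. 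The radial part is the whole point of that section: it is what produces the Bessel/Macdonald functions alongside the Mehler--Fock functions and underlies the claimed SU(1,1)$\sim$SO(2,1) correspondence. You instead read ``pseudosphere'' as the tractroid in Euclidean $3$-space, invoke Minding's theorem to transport the two-dimensional hyperbolic metric, and recover only the two-dimensional operator already obtained in the Fubini--Study section --- so your argument, while locally sound for the surface you chose, proves a weaker statement and misses the new ingredient (the radial coordinate and its Bessel-type separation) that the paper's proof delivers.

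There are two further mismatches worth naming. First, the global spectral difficulties you anticipate --- geodesic incompleteness, the cuspidal edge, compactified coordinates, self-adjoint extensions, a partially discretised spectrum --- are artefacts of the tractroid and simply do not arise for the paper's hyperboloid, which is a complete global model of the hyperbolic plane; so the ``hard part'' you identify is not part of what the proposition requires. Second, your closing observation that horocyclic coordinates lead to a Bessel-type radial equation is a genuinely nice remark and is consistent in spirit with the paper's later sections on the Kontorovich--Lebedev transform, but in the paper the Macdonald functions enter through the radial direction of the three-dimensional Minkowski computation (and through Laplace-transform manipulations of the Legendre equation), not through a choice of coordinates on a two-dimensional surface. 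If you redo the calculation on the hyperboloid with the $(r,\tau,\varphi)$ parametrisation, the proposition follows in a few lines of the same Laplace--Beltrami machinery you already set up, with no appeal to Minding's theorem and no boundary-condition analysis needed.
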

\begin{proof}

Let us consider first the parametrisation of the coordinates on the
hyperbolic surface. If we think about the basic equation of the pseudosphere,
we will have an expression such as $z^{2}-x^{2}-y^{2}=r^{2}$, in
which case we may take the expression of a point on the surface as:
\begin{equation}
\begin{array}{c}
x=r\sinh\tau\cos\varphi\\
y=r\sinh\tau\sin\varphi\\
z=r\cosh\tau
\end{array}
\end{equation}
in which case it is a simple exercise in differentiation to show that
the correct form of the metric may be written:
\begin{lemma}
    
\begin{equation}
ds^{2}=g_{\alpha\beta}dx^{\alpha}dx^{\beta}=dz^{2}-dx^{2}-dy^{2}=dr^{2}-r^{2}d\tau^{2}-r^{2}\sinh^{2}\tau d\varphi^{2}
\end{equation}
\end{lemma}
and we conclude that the metric is given by:
\begin{lemma}
    
\begin{equation}
g_{\alpha\beta}=\left[\begin{array}{ccc}
1 & 0 & 0\\
0 & -r^{2} & 0\\
0 & 0 & -r^{2}\sinh^{2}\tau
\end{array}\right]
\end{equation}
\end{lemma}
The Laplace-Beltrami operator in a curved space is given by:
\begin{equation}
\nabla^{2}f=\dfrac{1}{\sqrt{g}}\dfrac{\partial}{\partial x_{\alpha}}\left(\sqrt{g}g^{\alpha\beta}\dfrac{\partial f}{\partial x_{\beta}}\right)
\end{equation}
Now, it is simple to show that we have e.g. $g=\det g_{\alpha\beta}=r^{4}\sinh^{2}\tau$,
and contravariant metric given by:
\begin{equation}
g^{\alpha\beta}=\left[\begin{array}{ccc}
1 & 0 & 0\\
0 & -\dfrac{1}{r^{2}} & 0\\
0 & 0 & -\dfrac{1}{r^{2}\sinh^{2}\tau}
\end{array}\right]
\end{equation}
so we conclude that the Laplace operator is given the formula:
\begin{equation}
\nabla^{2}f=\dfrac{1}{r^{2}\sinh\tau}\sum_{\alpha,\beta}\dfrac{\partial}{\partial x_{\alpha}}\left(r^{2}\sinh\tau g^{\alpha\beta}\dfrac{\partial f}{\partial x_{\beta}}\right)
\end{equation}
The only non-zero components are $g^{rr}$, $g^{\tau\tau}$ and $g^{\varphi\varphi}$,
hence we find:
\begin{equation}
\nabla^{2}f=\dfrac{1}{r^{2}\sinh\tau}\left(\dfrac{\partial}{\partial r}\left(r^{2}\sinh\tau\dfrac{\partial f}{\partial r}\right)-\dfrac{\partial}{\partial\tau}\left(\dfrac{r^{2}\sinh\tau}{r^{2}}\dfrac{\partial f}{\partial\tau}\right)-\dfrac{\partial}{\partial\varphi}\left(\dfrac{r^{2}\sinh\tau}{r^{2}\sinh^{2}\tau}\dfrac{\partial f}{\partial\varphi}\right)\right)
\end{equation}
\begin{equation}
=\dfrac{1}{r^{2}}\dfrac{\partial}{\partial r}\left(r^{2}\dfrac{\partial f}{\partial r}\right)-\dfrac{1}{r^{2}\sinh\tau}\dfrac{\partial}{\partial\tau}\left(\sinh\tau\dfrac{\partial f}{\partial\tau}\right)-\dfrac{1}{r^{2}\sinh^{2}\tau}\dfrac{\partial}{\partial\varphi}\left(\dfrac{\partial f}{\partial\varphi}\right)
\end{equation}
Q.E.D.
\end{proof}

One can see how to develop out of various different potentials the
relevant groups of Bessel and Mehler-Fock functions. This is the basis
for the SU(1,1)\textasciitilde SO(2,1) isomorphism. This paper is
not concerned with this primarily and we shall leave it for a future
discussion. We note that the existence of such Laplacian operators
naturally implies the question of the existence of a state vector,
which in this case would generate a (2+1) dimensional vector representation
of the hyperbolic Hamiltonian, in the same way that SU(3) contains
an SO(3) subgroup, one would hope this would be of the type SU(1,1)\textasciitilde SO(2,1)$\subseteq$SU(2,1)
vs SU(2)\textasciitilde SO(3)$\subseteq$SU(3).

\section{Alternate Representations of the Hyperbolic Plane}
\begin{proposition}
    The Liouville form of the hyperbolic equation may be found using transform theory. This may be further transformed to yield the confluent hypergeometric functions of Whittaker type.
\end{proposition}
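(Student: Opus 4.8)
The plan is to realise the proposition in two moves: first a Liouville transformation that brings the hyperbolic radial equation into Schr\"odinger (normal) form, and then an integral transform that intertwines this normal operator with the Whittaker operator. I take as given the radial equation derived in the previous section,
\[
\frac{\partial^{2}\Upsilon}{\partial\tau^{2}}+\coth\tau\,\frac{\partial\Upsilon}{\partial\tau}+\left(\frac{k^{2}}{\sinh^{2}\tau}-E\right)\Upsilon=0,
\]
together with the spectral choice $E=-\tfrac14+\xi^{2}$.

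\emph{Step 1 (Liouville form).} I would substitute $\Upsilon=(\sinh\tau)^{-1/2}v$, the standard device for eliminating the first-order term $p(\tau)=\coth\tau$ via the integrating factor $\exp(-\tfrac12\int p\,d\tau)=(\sinh\tau)^{-1/2}$. Using $p^{2}=\coth^{2}\tau=1+\sinh^{-2}\tau$ and $p'=-\sinh^{-2}\tau$, the Liouville invariant $I=q-\tfrac14 p^{2}-\tfrac12 p'$ collapses to
\[
\frac{d^{2}v}{d\tau^{2}}+\left(\frac{k^{2}+\tfrac14}{\sinh^{2}\tau}-\xi^{2}\right)v=0,
\]
a hyperbolic P\"oschl--Teller equation. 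This is the Liouville form of the hyperbolic equation, and it explains \emph{a posteriori} the prefactor $1/\sqrt{\sinh\tau}$ carried by the eigenfunctions of the previous section.

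\emph{Step 2 (transform to Whittaker).} Since the normal form still has two regular singular points (at $\tau=0$ and $\tau\to\infty$) it is ${}_2F_{1}$-equivalent, and no algebraic change of variable can carry it to the Whittaker equation, whose point at infinity is irregular: an algebraic (M\"obius-type) substitution preserves the regular/irregular character of the singularities. The passage must therefore be effected by the hyperbolic Laplace transform of the preceding sections, which realises the required confluence. Concretely I would apply a transform with exponential kernel in the hyperbolic distance, $\tilde v(s)=\int_{0}^{\infty}e^{-s\cosh\tau}\,v(\tau)\,d\tau$, show by integration by parts that it intertwines the P\"oschl--Teller operator with the Whittaker operator $\frac{d^{2}}{dz^{2}}+\bigl(-\tfrac14+\frac{\kappa}{z}+\frac{\tfrac14-\mu^{2}}{z^{2}}\bigr)$, and read off $\mu=ik$, so that $\tfrac14-\mu^{2}=\tfrac14+k^{2}$ matches the $1/\sinh^{2}\tau$ coefficient, with $\kappa$ fixed by $\xi$. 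The emergent order $\mu=ik$ is precisely the upper index carried by the Legendre solutions $\mathcal{P}_{\xi-1/2}^{ik}(\cosh\tau)$ obtained earlier, which serves as a consistency check; the solutions are then the confluent hypergeometric functions $M_{\kappa,ik}$ and $W_{\kappa,ik}$ of Whittaker type.

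The main obstacle is Step 2, not Step 1. The Coulomb term $\kappa/z$ cannot come from the potential alone, which supplies only the $1/z^{2}$ term; it must be produced jointly by the transform kernel and the Jacobian of the underlying change of variable, so the heart of the work is selecting the kernel so that these contributions assemble \emph{exactly} into the Whittaker potential. Alongside this I would have to justify the analysis: convergence of the integral on the relevant spectral strip, legitimacy of interchanging the transform with differentiation (the boundary terms from integration by parts must vanish, which constrains the admissible growth of $v$ as $\tau\to0$ and $\tau\to\infty$), and injectivity of the transform on the solution space, so that the correspondence of solution bases, and of the boundary data selecting $\mathcal{P}$ versus $\mathcal{Q}$, equivalently $M$ versus $W$, is genuinely exact rather than merely formal.
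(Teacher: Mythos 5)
Your Step 1 reproduces the paper's first move: the substitution $\Upsilon=(\sinh\tau)^{-1/2}v$ (the paper writes it as $u=e^{\int\phi\,dx}U$ with $\phi=-\tfrac{1}{2}\coth\tau$) does yield the normal form with potential $(k^{2}+\tfrac14)/\sinh^{2}\tau$, which is the paper's $(\tfrac14-k^{2})/\sinh^{2}\tau$ after its relabelling $k\to-ik$, $\xi\to i\xi$; and your observation that no algebraic change of variable can reach the irregular singularity of the Whittaker equation is sound. The gap is in Step 2, and you have located it yourself without closing it: you say "the heart of the work is selecting the kernel," but the kernel you then commit to, the bare $\tilde v(s)=\int_{0}^{\infty}e^{-s\cosh\tau}v(\tau)\,d\tau$, is not the right one, and the step fails for $k\neq0$. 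In the variable $z=\cosh\tau$ the obstruction is the term $k^{2}/(z^{2}-1)$: a Laplace transform in $z$ carries polynomial coefficients to differential operators in the dual variable, but $(z^{2}-1)^{-1}f(z)$ has no local image, so no ordinary differential equation in $s$ results and there is nothing to intertwine.

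The missing idea, which is the actual content of the paper's proof, is to premultiply by the conformal (Cayley-type) factor $\left(\dfrac{z+1}{z-1}\right)^{k/2}$ \emph{before} transforming. Conjugating the Legendre operator by this factor produces
\[
\mathcal{H}'[f]=\dfrac{\nu^{2}+1}{4}\,f+2(z-k)\dfrac{\partial f}{\partial z}+(z^{2}-1)\dfrac{\partial^{2}f}{\partial z^{2}},
\]
whose coefficients are polynomial, so the Laplace transform applies term by term (this is precisely Gradshteyn--Ryzhik 7.141.5, $\int_{1}^{\infty}e^{-ax}\left(\frac{x+1}{x-1}\right)^{\mu/2}\mathcal{P}_{\nu-1/2}^{\mu}(x)\,dx=W_{\mu,\nu}(2a)/a$), giving $p^{2}F''+2pF'+\left(\tfrac14(\nu^{2}+1)-2pk-p^{2}\right)F=0$ with solutions $W_{-k,i\nu/2}(2p)/p$ and $M_{-k,i\nu/2}(2p)/p$. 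Note the index bookkeeping this forces, which contradicts your prediction: the Coulomb index is $\kappa=-k$, generated by the prefactor through the $-2k\,\partial_{z}$ term and not by the potential, while the second index $i\nu/2$ comes from the eigenvalue $\tfrac14(\nu^{2}+1)$, not from matching $\tfrac14-\mu^{2}=\tfrac14+k^{2}$. Your bare kernel, whose Jacobian supplies an algebraic weight $(z^{2}-1)^{-1/2}$ rather than the Cayley factor, is of the type the paper uses in its following section (G\&R 7.142.1) to reach the Macdonald/Kontorovich--Lebedev representation, not the Whittaker one. The analytic caveats you list (boundary terms, convergence, injectivity) are appropriate, and the paper likewise discards the boundary terms, but they are secondary to supplying the correct prefactor.
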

\begin{proof}

We shall now show some simple ways in which we can develop further
representations of the hyperbolic plane, using the differential equation
of Legendre type. If we take the fundamental equation, we have:
\begin{lemma}
   \begin{equation}
\dfrac{\partial^{2}f}{\partial\tau^{2}}+\coth\tau\dfrac{\partial f}{\partial\tau}+\left(-\dfrac{k^{2}}{\sinh^{2}\tau}+\left(\dfrac{1}{4}+\xi^{2}\right)\right)f=0
\end{equation} 
\end{lemma}

Now, a simple method to remove the middle term is to use the following
transformation.
\begin{equation}
u''+a_{1}(x)u'+a_{2}(x)u=0
\end{equation}
\begin{equation}
U''+U\left[\phi'+\phi^{2}+\phi a_{1}+a_{2}\right]=0
\end{equation}
\begin{equation}
\phi(x)=-\dfrac{a_{1}(x)}{2}
\end{equation}
\begin{equation}
u(x)=e^{\int\phi dx}U(x)
\end{equation}  
We obtain the heat type equation:
\begin{equation}
\dfrac{\partial^{2}F}{\partial\tau^{2}}+\left(\dfrac{\partial}{\partial\tau}\left[-\dfrac{\coth\tau}{2}\right]+\left[-\dfrac{\coth\tau}{2}\right]^{2}-\dfrac{1}{2}(\coth\tau)^{2}+\left(-\dfrac{k^{2}}{\sinh^{2}\tau}+\left(\dfrac{1}{4}+\xi^{2}\right)\right)\right)F=0
\end{equation}
or
\begin{equation}
\dfrac{\partial^{2}F}{\partial\tau^{2}}+\left(-\dfrac{k^{2}}{\sinh^{2}\tau}+\dfrac{1}{4}\left(\coth^{2}\tau-1\right)+\xi^{2}\right)F
\end{equation}
\begin{equation}
    =\dfrac{\partial^{2}F}{\partial\tau^{2}}+\left(\dfrac{(1/4-k^{2})}{\sinh^{2}\tau}+\xi^{2}\right)F=0
\end{equation}

Now we can also write this equation in a complementary form using
the hyperbolic substitutions. For the original equation, we have:

\begin{equation}
\dfrac{\partial^{2}f}{\partial\tau^{2}}+\coth\tau\dfrac{\partial f}{\partial\tau}+\left(-\dfrac{k^{2}}{\sinh^{2}\tau}+\left(\dfrac{1}{4}+\xi^{2}\right)\right)f=0
\end{equation}
Multiplying out, we have:
\begin{equation}
\sinh^{2}\tau\dfrac{\partial^{2}f}{\partial\tau^{2}}+\sinh\tau\cosh\tau\dfrac{\partial f}{\partial\tau}+\left(-k^{2}+\left(\dfrac{1}{4}+\xi^{2}\right)\sinh^{2}\tau\right)f=0
\end{equation}
Using $z=\cosh\tau$, $z^{2}-\sinh^{2}\tau=1$, we can rewrite:
\begin{equation}
\dfrac{\partial f}{\partial\tau}=\dfrac{\partial f}{\partial z}\dfrac{\partial z}{\partial\tau}=\sinh\tau\dfrac{\partial f}{\partial z}=\sqrt{z^{2}-1}\dfrac{\partial f}{\partial z}
\end{equation}
\begin{equation}
\dfrac{\partial f}{\partial z}=\dfrac{1}{\sinh\tau}\dfrac{\partial f}{\partial\tau}
\end{equation}
\begin{equation}
\dfrac{\partial^{2}f}{\partial z^{2}}=\dfrac{1}{\sinh\tau}\dfrac{\partial}{\partial\tau}\dfrac{1}{\sinh\tau}\dfrac{\partial f}{\partial\tau}=\dfrac{1}{\sinh^{2}\tau}\dfrac{\partial^{2}f}{\partial\tau^{2}}-\left(\dfrac{\cosh\tau}{\sinh^{3}\tau}\right)\dfrac{\partial f}{\partial\tau}
\end{equation}
\begin{equation}
=\dfrac{1}{\sinh^{2}\tau}\left(\dfrac{\partial^{2}f}{\partial\tau^{2}}-\coth\tau\dfrac{\partial f}{\partial\tau}\right)
\end{equation}
\begin{equation}
(z^{2}-1)\dfrac{\partial^{2}f}{\partial z^{2}}=\dfrac{\partial^{2}f}{\partial\tau^{2}}-\coth\tau\dfrac{\partial f}{\partial\tau}
\end{equation}
\begin{equation}
\dfrac{\partial f}{\partial z}=\dfrac{1}{\sinh\tau}\dfrac{\partial f}{\partial\tau}
\end{equation}
\begin{equation}
z\dfrac{\partial f}{\partial z}=\coth\tau\dfrac{\partial f}{\partial\tau}
\end{equation}
\begin{equation}
(z^{2}-1)\dfrac{\partial^{2}f}{\partial z^{2}}+2z\dfrac{\partial f}{\partial z}+\left(-k^{2}+\dfrac{\left(\dfrac{1}{4}+\xi^{2}\right)}{\left(z^{2}-1\right)}\right)f=0
\end{equation}

If we take the representation of states such that:

\begin{equation}
(z^{2}-1)\dfrac{\partial^{2}f}{\partial z^{2}}+2z\dfrac{\partial f}{\partial z}+\left(\dfrac{(\nu^{2}+1)}{4}-\dfrac{k^{2}}{(z^{2}-1)}\right)f=0
\end{equation}
we will have solution
\begin{equation}
f(z)=C_{1}\mathcal{P}_{i\nu-1/2}^{k}(z)+C_{2}\mathcal{Q}_{i\nu-1/2}^{k}(z)
\end{equation}

Obviously this is permitted as it is nothing but an overall relabelling
of the same energy eigenvalue, using switched labels. We can see how
the different forms of the Mehler-Fock transform arise, as various
projections of the hyperbolic sphere into the disk or plane. We shall
now make some observations. If we take the new form of the eigenvalue
equation:

\begin{equation}
(z^{2}-1)\dfrac{\partial^{2}f}{\partial z^{2}}+2z\dfrac{\partial f}{\partial z}+\left(\dfrac{(\nu^{2}+1)}{4}-\dfrac{k^{2}}{(z^{2}-1)}\right)f=0
\end{equation}
Consulting Gradshteyn \& Ryzik, \cite{gradshteyn2014table} we have formulae 7.141.5:
\begin{lemma}
    
\begin{equation}
\int_{_{1}}^{\infty}e^{-ax}\left(\dfrac{x+1}{x-1}\right)^{\mu/2}\mathcal{P}_{\nu-1/2}^{\mu}(x)dx=\dfrac{W_{\mu,\nu}(2a)}{a}
\end{equation}
\end{lemma}
This suggests that if we insert the factor 

\begin{equation}
\int_{_{1}}^{\infty}e^{-yz}\left(\dfrac{z+1}{z-1}\right)^{k/2}\mathcal{P}_{i\nu-1/2}^{k}(x)dx=\dfrac{W_{k,i\nu}(2y)}{y}
\end{equation}
i.e., what amounts to premultiplying by a term $\left(\dfrac{z+1}{z-1}\right)^{k/2}$
and taking a Laplace transform, we will receive an interesting complementary
form of the differential equation. 
We can write:
\begin{theorem}
    
\begin{equation}
(z^{2}-1)\dfrac{\partial^{2}f}{\partial z^{2}}+2z\dfrac{\partial f}{\partial z}+\left(\dfrac{(\nu^{2}+1)}{4}-\dfrac{k^{2}}{(z^{2}-1)}\right)f=0=\mathcal{H}[f(z)]
\end{equation}
\begin{equation}
\mathcal{H}\left[\left(\dfrac{z+1}{z-1}\right)^{k/2}f(z)\right]=\left(\dfrac{z+1}{z-1}\right)^{k/2}\left[\dfrac{(\nu^{2}+1)}{4}f+2(z-k)\dfrac{\partial f}{\partial z}+(z^{2}-1)\dfrac{\partial^{2}f}{\partial z^{2}}\right]=\mathcal{H}'[f(z)]
\end{equation}
\end{theorem}
We can therefore see that, firstly, we will have a link to the confluent
hypergeometric function via this route, and secondly, that it will
be possible to transform using the rules of the Laplace transform
to derive some interesting formulae. This is a sort of automorphic
form which gives the necessary conformal mapping from one representation
to another. If we Laplace transform the differential operator, we
find:
\begin{theorem}
    
\begin{equation}
\mathcal{L}[\mathcal{H}[f]]=\int_{0}^{\infty}e^{-px}\mathcal{H}'[f(x)]dx=p^{2}\dfrac{\partial^{2}F}{\partial p^{2}}+2p\dfrac{\partial F}{\partial p}+\left(\dfrac{1}{4}(\nu^{2}+1)-2pk-p^{2}\right)F=0
\end{equation}
\end{theorem} 

with solution: 
\begin{equation}
F(p)=C_{1}\dfrac{W_{-k,i\nu/2}(2p)}{p}+C_{2}\dfrac{M_{-k,i\nu/2}(2p)}{p}
\end{equation}
Q.E.D.
    
\end{proof}

Note that we have neglected the boundary terms in this analysis as they add little to the basic argument other than some inhomogenous terms on the right hand side of the differential equation. The demonstration
of the consequences of this result shall form another paper in and
of itself. For now, we note the various representations we have developed
for the different forms of the hyperbolic plane. We can go on to develop
kernels and transition probability densities now that we have an understanding
of the basic relationships between these groups of special functions.
There is an obvious extension to be had to the Kontorovich-Lebedev
transformation, but we shall leave that for the discussion. 

\section{Recap of Results}

We began with the differential equation, which we derived from a hyperbolic
brachistochrone. It was given by the Helmholz type equation in two
dimensions, defined by:

\begin{equation}
E\Psi=-\dfrac{\partial^{2}\Psi}{\partial\tau^{2}}-\coth\tau\dfrac{\partial\Psi}{\partial\tau}+\dfrac{1}{\sinh^{2}\tau}\dfrac{\partial^{2}\Psi}{\partial\varphi^{2}}
\end{equation}
With regards to the solution $\Psi(\tau,\varphi)$, we Fourier transformed
in the second variable thus resulting in $\Upsilon(\tau,k)=\int_{-\infty}^{+\infty}e^{-ik\varphi}\Psi(\tau,\varphi)d\varphi$,
and then were able to formulate our problem purely in terms of the
solutions to the expression given by:

\begin{equation}
\dfrac{\partial^{2}\Upsilon}{\partial\tau^{2}}+\coth\tau\dfrac{\partial\Upsilon}{\partial\tau}+\left(\dfrac{k^{2}}{\sinh^{2}\tau}-E\right)\Upsilon=0
\end{equation}
We have also shown that the eigenfunctions of this expression, depending
on the geometry and sign of energy or otherwise, may be given as Whittaker
functions as in the previous section, or as conical functions of Mehler-Fock
type. The related equation for the Whittaker function is given by:
\begin{equation}
p^{2}\dfrac{\partial^{2}F}{\partial p^{2}}+2p\dfrac{\partial F}{\partial p}+\left(\dfrac{1}{4}(\nu^{2}+1)-2pk-p^{2}\right)F=0
\end{equation}
which we obtained by using a combination of Laplace transformation
and premultiplication with a conformal factor, which is given by a
Cayley transform. We have shown that these concepts descend from the
central formula for the metric in the hyperbolic plane:
\begin{equation}
ds^{2}=\dfrac{1}{4}(-d\tau^{2}+\sinh^{2}\tau d\varphi^{2})
\end{equation}
which we derived by using a clever form of the representation theory
of the quantum brachistochrone. To the author's knowledge, this is
the first true application of matching the groups generated by this
form of time optimal control theory and the motion of objects with
continuous degrees of freedom. We can expect various different relationships
to occur depending on the transformation of the input matrices, pseudo-unitary
operators and metric, which will result in a simplification of the
overall structure we associate to the representation theory. Note
the simplicity of proof compared to \cite{vilenkin1978special}; indeed, we have
begun in a different direction to their calculation and have streamlined
the approach, while achieving equivalent results. The use of differential
geometry has been invaluable in this sense, as it gives a clear and
unambiguous way in which to associate the geometry of the group with
both the parameters which define the metric and the operators which
drive the diffusion.

\section{Links with Bessel Representation}
\begin{proposition}
    A second formulation of the eigenvalue equation is given by the Kontorovich-Lebedev functions. The Mehler-Fock equation is related to this by an integral transform.
\end{proposition}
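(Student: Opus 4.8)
The plan is to obtain the Kontorovich--Lebedev formulation not by manipulating the Legendre equation directly, but by re-deriving the eigenvalue problem in a \emph{second} coordinate chart on the hyperbolic plane --- the horocyclic (upper half-plane) model reached from the Poincar\'e disk by the Cayley transform already lurking in the conformal factor $\left(\tfrac{z+1}{z-1}\right)^{k/2}$ of the previous section. First I would record the metric in these coordinates, $ds^{2}=y^{-2}(dx^{2}+dy^{2})$, and the associated Laplace--Beltrami operator $y^{2}(\partial_{x}^{2}+\partial_{y}^{2})$, then check that it is isometric to the operator $-\partial_{\tau}^{2}-\coth\tau\,\partial_{\tau}+\sinh^{-2}\tau\,\partial_{\varphi}^{2}$ obtained earlier, so that the spectrum and the spectral parameter $E=\tfrac14+\xi^{2}$ are carried over intact under the change of model.

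Next I would separate variables in the translation-invariant horocyclic direction, writing $\Psi=e^{ikx}v(y)$, which collapses the eigenvalue equation to the single ordinary differential equation
\begin{equation}
v''+\left(\dfrac{E}{y^{2}}-k^{2}\right)v=0 .
\end{equation}
With $E=\tfrac14+\xi^{2}$ this is, after the trivial rescaling $u=|k|y$, the Liouville normal form of the modified Bessel (Macdonald) equation of imaginary order, whose decaying solution is $v=\sqrt{y}\,K_{i\xi}(|k|y)$. These are precisely the kernel functions of the Kontorovich--Lebedev transform, which establishes the first clause of the proposition and provides the promised ``second formulation.''

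To prove the second clause --- that the Mehler--Fock equation is related to this by an integral transform --- I would exploit the fact that the conical functions $\mathcal{P}_{i\xi-1/2}^{k}(\cosh\tau)$ and the Macdonald functions $K_{i\xi}(|k|y)$ are \emph{complete systems of eigenfunctions of the same operator at the same eigenvalue}, differing only through the isometry between the two models. The intertwining map between the two eigenfunction expansions is therefore an integral transform, which I would construct concretely by inserting the common integral representation $K_{i\xi}(x)=\int_{0}^{\infty}e^{-x\cosh t}\cos(\xi t)\,dt$ into the Fourier inversion of the disk solution and matching it against the Mehler--Dirichlet representation of the conical function. This parallels exactly the Laplace-transform-plus-conformal-factor device already used to pass from the Legendre form to the Whittaker form, with the Kontorovich--Lebedev kernel now playing the role the Whittaker function played there.

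The hard part will not be the formal correspondence but the bookkeeping required to make the transform an honest inversion pair. Both the Mehler--Fock and Kontorovich--Lebedev transforms are index transforms, and each carries its own spectral weight --- the factor $|\Gamma(1/2+ip-\mu)/\Gamma(ip)|^{2}$ in the completeness relation quoted earlier for the former, and $\tfrac{2}{\pi^{2}}\,p\sinh(\pi p)$ for the latter. Matching these two measures, tracking the gamma-function normalisations through the integral representation, and verifying convergence of the index integrals so that the two completeness relations are genuinely equivalent is where the real work lies. As a consistency check on the leading normalisation I would use the contraction limit $\sinh\tau\to\tau$, in which $\mathrm{SU}(1,1)$ degenerates to the Euclidean motion group and the conical functions collapse onto Bessel functions, confirming that the Kontorovich--Lebedev system is the flat-limit shadow of the Mehler--Fock system.
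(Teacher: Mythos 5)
Your route is sound but genuinely different from the paper's. The paper stays entirely in the Legendre picture: it takes the equation $(z^{2}-1)f''+2zf'+\bigl(\tfrac{\nu^{2}+1}{4}-\tfrac{k^{2}}{z^{2}-1}\bigr)f=0$, conjugates the operator by the weight $(z^{2}-1)^{k/2}$, and applies a Laplace transform in $z$; the transformed ODE is $p^{2}F''+(2-2k)pF'+\bigl(k(k-1)+\tfrac14(\nu^{2}+1)-p^{2}\bigr)F=0$ with solutions $p^{k-1/2}K_{i\nu/2}(p)$, and the integral transform asserted in the second clause is exhibited explicitly as this weighted Laplace transform, witnessed by the tabulated integral $\int_{1}^{\infty}e^{-ax}(x^{2}-1)^{-\mu/2}\mathcal{P}_{\nu}^{\mu}(x)\,dx=\sqrt{2/\pi}\,a^{\mu-1/2}K_{\nu+1/2}(a)$. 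You instead change the model of the hyperbolic plane: separating in horocyclic coordinates gives $v''+\bigl(\tfrac{E}{y^{2}}-k^{2}\bigr)v=0$ and hence $\sqrt{y}\,K_{i\xi}(|k|y)$ directly. Your computation there is correct, and it buys something the paper's operational manipulation does not: a geometric explanation of \emph{why} the Macdonald functions of imaginary order appear at all (they are the horocyclic-wave eigenfunctions of the same Laplacian). What the paper's approach buys is an explicit, one-line intertwiner and exact agreement of form with the Whittaker derivation of the preceding section.

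The weak point is your second clause. Saying that the conical and Macdonald systems are ``complete systems of eigenfunctions of the same operator at the same eigenvalue'' does not by itself produce the integral transform, because the separation constants are not the same object in the two charts: in the polar/disk picture $k$ is the angular Fourier index conjugate to $\varphi$, while in the half-plane picture $k$ is the frequency along the horocycle. An eigenspace of the Laplacian at fixed $E=\tfrac14+\xi^{2}$ is infinite-dimensional, so the intertwiner you need mixes the $k$'s and must be constructed, not merely asserted; your plan to do this by inserting integral representations and matching the two Plancherel weights is the right programme, but it is deferred rather than executed, and it is precisely the content of the proposition. The paper closes this gap cheaply because its transform acts in the radial variable $z=\cosh\tau$ at \emph{fixed} order $k$ and fixed degree, so no spectral-measure matching is required. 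If you want to keep your geometric route, you should at minimum write down the kernel of the intertwining map (essentially the Laplace-type integral above, read as the overlap of a horocyclic wave with a spherical wave) rather than leaving it as bookkeeping.
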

\begin{proof}
    
If we begin with the differential equation defined
by:

\begin{equation}
(z^{2}-1)\dfrac{\partial^{2}f}{\partial z^{2}}+2z\dfrac{\partial f}{\partial z}+\left(\dfrac{(\nu^{2}+1)}{4}-\dfrac{k^{2}}{(z^{2}-1)}\right)f=0=\mathcal{H}[f(z)]
\end{equation}
We have the known integral 7.142.1 \cite{gradshteyn2014table}:
\begin{equation}
\int_{_{1}}^{\infty}e^{-ax}(x^{2}-1)^{-\mu/2}\mathcal{P}_{\nu}^{\mu}(x)dx=\sqrt{\dfrac{2}{\pi}}a^{\mu-1/2}K_{\nu+1/2}(a)
\end{equation}
Once again, this is indicative of the process we shall use to derive
the relationship between the differential system described by the
the transformed solution. We expect that premultiplication followed
by Laplace transformation shall result in a new differential equation,
with the Macdonald function as kernel. As before, we write:
\begin{equation}
(z^{2}-1)\dfrac{\partial^{2}f}{\partial z^{2}}+2z\dfrac{\partial f}{\partial z}+\left(\dfrac{(\nu^{2}+1)}{4}-\dfrac{k^{2}}{(z^{2}-1)}\right)f=0=\mathcal{H}[f(z)]
\end{equation}
\begin{equation}
(z^{2}-1)^{-k/2}\mathcal{H}\left[(z^{2}-1)^{k/2}f(z)\right]
\end{equation}
\begin{equation}
    =\left[\left(\dfrac{(\nu^{2}+1)}{4}+k(k+1)\right)f+2(k+1)z\dfrac{\partial f}{\partial z}+(z^{2}-1)\dfrac{\partial^{2}f}{\partial z^{2}}\right]
\end{equation}
Laplace transforming in $z$, and ignoring the boundary conditions
as before, we find the system:
\begin{theorem}
    
\begin{equation}
p^{2}\dfrac{\partial^{2}F}{\partial p^{2}}+(2-2k)p\dfrac{\partial F}{\partial p}+(k(k-1)+\dfrac{1}{4}(\nu^{2}+1)-p^{2})F=0
\end{equation}
\begin{equation}
F(p)=p^{k-1/2}\left(C_{1}K_{i\nu/2}(p)+C_{2}I_{i\nu/2}(p)\right)
\end{equation}
\end{theorem}
Q.E.D.
\end{proof} 
We can see therefore that this will give rise to a kernel
formula, to various types of inversion formulae, to Green's functions
and other such relationships. It is simple to see how one can arrive
at the various expressions for the Kontorovich-Lebedev transformation
by using such methods. For now, we remark on the simplicity of such
a relationship between the Mehler-Fock, Whittaker and Macdonald type
functions, and how they transform under the differential operator.

\section{Composition Formula }
\begin{proposition}
    A composition formula for the Kontorovich-Lebedev functions can be evaluated using the Mehler-Fock functions and the completeness relation.
\end{proposition}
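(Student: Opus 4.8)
The plan is to transport the Mehler--Fock completeness relation into the Kontorovich--Lebedev setting by means of the Laplace-type integral identity 7.142.1, which is precisely the bridge connecting the conical functions $\mathcal{P}_{\nu}^{\mu}$ to the Macdonald functions $K_{\nu+1/2}$. Setting $\nu = ip - 1/2$ in that identity exhibits an integral operator $\mathcal{T}$ carrying the Mehler--Fock kernel $\mathcal{P}_{ip-1/2}^{\mu}(x)$ to the Kontorovich--Lebedev kernel $K_{ip}(a)$, up to the elementary prefactor $\sqrt{2/\pi}\,a^{\mu-1/2}$. This is the same premultiplication-and-Laplace-transform device used in the preceding sections to pass from the Legendre to the Whittaker and Macdonald representations, so no new machinery is required.

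First I would apply $\mathcal{T}$ simultaneously in both variables of the spectral completeness relation
\[
\int_0^\infty \left|\frac{\Gamma\!\left(1/2+ip-\mu\right)}{\Gamma(ip)}\right|^2 \mathcal{P}_{ip-1/2}^{\mu}(x)\,\mathcal{P}_{ip-1/2}^{\mu}(y)\,dp = \delta(x-y).
\]
Concretely, I would multiply both sides by $e^{-ax}(x^2-1)^{-\mu/2}\,e^{-by}(y^2-1)^{-\mu/2}$ and integrate over $x,y\in(1,\infty)$. On the left-hand side the $x$- and $y$-integrals each collapse through 7.142.1 to a Macdonald function, producing a single spectral integral of the form $\int_0^\infty W(p)\,K_{ip}(a)\,K_{ip}(b)\,dp$, where $W(p)$ is the transformed weight carrying the Gamma-function density.

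Second, I would evaluate the right-hand side: the double integral of $\delta(x-y)$ against the two weight factors collapses to the single integral $\int_1^\infty e^{-(a+b)x}(x^2-1)^{-\mu}\,dx$, which is itself of Macdonald type (via the standard integral representation of $K_\nu$) and yields a Macdonald function of the combined argument $a+b$, with an appropriate power prefactor. Equating the two sides then furnishes the composition formula: an identity expressing the weighted spectral integral of the product $K_{ip}(a)\,K_{ip}(b)$ as a single Macdonald function of $a+b$. The orthogonality member of the Grosche--Steiner pair can be pushed through the same transform to confirm that $\mathcal{T}$ is, up to normalisation, an isometry, so that the Kontorovich--Lebedev transform composed with its inverse reproduces the identity.

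The hard part will be tracking the spectral density through the transform. The conical-function completeness carries the weight $|\Gamma(1/2+ip-\mu)/\Gamma(ip)|^2$, whereas the native Kontorovich--Lebedev inversion carries the density $\tfrac{2}{\pi^2}\,p\sinh(\pi p)$; these two measures must be reconciled, and the reconciliation is exactly the content of the reflection and duplication identities for the Gamma function. Any mismatch in the resulting prefactors would signal either a missing normalisation constant or a contribution from the boundary terms that were discarded in the earlier Laplace-transform step, so the main obstacle is the careful bookkeeping of the Gamma-function and power-of-$a$ factors rather than any conceptual difficulty.
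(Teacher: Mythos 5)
Your proposal is correct and follows essentially the same route as the paper: both arguments rest on the conical-function completeness relation combined with the Laplace-type integral carrying $\mathcal{P}^{\mu}_{ip-1/2}(x)$ to $K_{ip}(a)$, interchanged via Fubini--Tonelli, with the residual integral $\int_1^\infty e^{-(a+b)x}(x^2-1)^{-\mu}\,dx$ evaluating to the Macdonald function of the combined argument $a+b$. The only differences are presentational --- you transform the completeness relation in both variables to build the spectral integral of $K_{ip}(a)K_{ip}(b)$, whereas the paper starts from that spectral integral and collapses it to a delta function --- and your weight $\left|\Gamma(1/2+ip-\mu)/\Gamma(ip)\right|^2$ agrees with the paper's $p\sinh(\pi p)\,\Gamma(\lambda+ip)\Gamma(\lambda-ip)/\pi$ (with $\lambda=1/2-\mu$) via the reflection formula, exactly the bookkeeping you flagged as the main obstacle.
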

\begin{proof}

From the formula for the Mehler-Fock kernel as in \cite{sousa2017spectral}, we
have the completeness relationship:
\begin{equation}
\pi\delta(x-y)=\int_{0}^{\infty}|\Gamma\left(ip+\mu\right)|^{2}p\sinh(\pi p)\mathcal{P}_{ip-1/2}^{-\mu+1/2}\left(x\right)\mathcal{P}_{ip-1/2}^{-\mu+1/2}\left(y\right)dp
\end{equation}
We also have the integral formula which converts a conical function
into one of Macdonald type:
\begin{equation}
a^{\lambda}\sqrt{\dfrac{\pi}{2}}\int_{1}^{\infty}(x^{2}-1)^{\lambda/2-1/4}\mathcal{P}_{i\nu-1/2}^{-\lambda+1/2}(x)e^{-ax}dx=K_{i\nu}(a)
\end{equation}
Calculating a Kontorovich-Lebedev transform, we may write:

\begin{equation}
\int_{0}^{\infty}\nu\sinh(\pi\nu)\Gamma(\lambda+i\nu)\Gamma(\lambda-i\nu)K_{i\nu}(a)K_{i\nu}(b)d\nu=\int_{0}^{\infty}\nu\sinh(\pi\nu)\Gamma(\lambda+i\nu)\Gamma(\lambda-i\nu)(ab)^{\lambda}\dfrac{\pi}{2}
\end{equation}
\[
\times\int_{1}^{\infty}(x^{2}-1)^{\lambda/2-1/4}\mathcal{P}_{i\nu-1/2}^{-\lambda+1/2}(x)e^{-ax}dx\int_{1}^{\infty}(y^{2}-1)^{\lambda/2-1/4}\mathcal{P}_{i\nu-1/2}^{-\lambda+1/2}(y)e^{-by}dy
\]

\begin{equation}
=(ab)^{\lambda}\dfrac{\pi}{2}\int_{1}^{\infty}(x^{2}-1)^{\lambda/2-1/4}e^{-ax}dx\int_{1}^{\infty}(y^{2}-1)^{\lambda/2-1/4}e^{-by}dy
\end{equation}
\[
\times\int_{0}^{\infty}\nu\sinh(\pi\nu)\Gamma(\lambda+i\nu)\Gamma(\lambda-i\nu)\mathcal{P}_{i\nu-1/2}^{-\lambda+1/2}(x)\mathcal{P}_{i\nu-1/2}^{-\lambda+1/2}(y)d\nu
\]
\begin{equation}
=(ab)^{\lambda}\dfrac{\pi^{2}}{2}\int_{1}^{\infty}(x^{2}-1)^{\lambda/2-1/4}e^{-ax}dx\int_{1}^{\infty}\delta(x-y)(y^{2}-1)^{\lambda/2-1/4}e^{-by}dy
\end{equation}
\begin{equation}
=(ab)^{\lambda}\dfrac{\pi^{2}}{2}\int_{1}^{\infty}dx.(x^{2}-1)^{\lambda-1/2}e^{-(a+b)x}
\end{equation}
\begin{equation}
=(ab)^{\lambda}\dfrac{\pi^{2}}{2}\dfrac{\Gamma(\lambda+1/2)}{\sqrt{\pi}}\left(\dfrac{a+b}{2}\right)^{-\lambda}K_{\lambda}(a+b)
\end{equation}
\begin{equation}
=\dfrac{\pi^{3/2}\Gamma(\lambda+1/2)}{2}\left(\dfrac{a+b}{2ab}\right)^{-\lambda}K_{\lambda}(a+b)
\end{equation}    
\end{proof}

as required, see e.g. \cite{gradshteyn2014table}. We have invoked the Fubini-Tonelli
theorem to interchange the integrals as usual. It is simple to see
how many more integrals of a similar kind may be approached; as there
are a great deal of readily available formulae for associated Legendre
functions, this operational technique may be one way in which formulae
involving index transformations might be found.

\section{Discussion}

This paper has presented some novel methods for determining the relationships
between some well known groups of special functions. We have demonstrated
that the different representations are related to the reference frame
in which we view things; in short, the various different slices make
up the cake, the way in which we cut it defines the particular set
of eigenfunctions we must use. We postpone discussion of the eigenfunction
representations for a following paper, as this note merely illustrates
the existence of the link and not its use in calculation. We have
shown that the theories of brachistochrones are well suited to the
application of ideas from differential geometry and the calculus of
diffusions; this is expected to be a straightforward way in which
particle dynamics can be understood, even for complex systems as we
have considered in the hyperbolic plane.

We have shown directly how one can move from the group representation,
given as a matrix with some symmetry, to the continuous operators
familiar from differential calculus, via the methods of differential
geometry. This is to be an area of future research, as the areas and
topics developed in understanding this link will have direct bearing
on questions of a physical nature. For now, we merely comment that
the hyperbolic plane is not just a hypothetical construct. The differential
systems we have developed an understanding of by using these techniques
have direct bearing on such varied problems as oscillatory quantum
systems, pricing of exotic options and various questions in hyperbolic
flow. It will be interesting to see whether there are higher dimensional
analogues of these types of problems.

For simplicity, this note has not addressed the form of the constants
associated with the solutions of the various differential equations.
One notes that one possible way in which to consistently derive these
parameters is by using the Titsmarsch-Kodaira theorem as in \cite{sousa2017spectral}.
The various expressions for the normalisations and some brief calculations
are given in the appendix.

\section{Future Directions}

It is interesting to contemplate which other groups would be amenable
to the treatment we have prescribed to SU(1,1). The most obvious of
these seems to be SU(2), with consequences for quantum mechanical
systems. Further research directives would appear to be establishing
the relevant sets of special functions for efficient computation of
exotic options pricing using such methods, as well as determining
whether such techniques can be used for pricing basic contracts using
such things as CIR and OU processes. It would be hoped that a general
theory of (pseudo)-spherical systems might be developed using such
intuitions.

In other possible extensions, it might be possible to use these techniques
to establish the nature of diffusion on higher dimensional hyperbolic
groups of a physical nature. As both the equations of quantitative
finance and physics are based around similar methods, theory and practice,
it is possible to see that there may be new methods arising in this
space. Other outstanding questions of interest might include the computation
of Wigner functions using such methods for SU(1,1), see e.g. \cite{seyfarth2020wigner}
for an approach based on creation and annihilation operators. Such
functions have a natural interpretation in statistical mechanics,
giving the Poisson bracket of the Hamiltonian flow.

This paper has not discussed the problem of the discrete states, or
the difference between the scattered and bound states. We note that
for certain parts of the spectrum that this paper has not considered,
different groups of special functions may be more important than the
continuous groups we have calculated in this note, in particular the
Laguerre functions. These expansions have been used to evaluate exponential
Brownian motion \cite{pintoux2010direct, pintoux2011dothan} and the pricing of Asian
options \cite{zhang2010path}. In particular, of interest is the development
of kernel densities for these type of systems, and finding efficient
ways to generate numerical values for the confluent hypergeometric
function.

\section*{Acknowledgment}
This project was supported under ARC Research Excellence Scholarships at the University of Technology, Sydney. The author acknowledges useful discussions and support from Dr. Mark Craddock and Prof. Anthony Dooley.

\section*{Conflicts of Interest}
The author has no conflicts of interest to declare that are relevant to the content of this article.

\bibliography{sn-bibliography}

\end{document}